\theoremstyle{plain}
\newtheorem{thm}{\protect\theoremname}
\theoremstyle{definition}
\newtheorem{defn}[thm]{\protect\definitionname}
\theoremstyle{plain}
\newtheorem{lem}[thm]{\protect\lemmaname}
\providecommand{\definitionname}{Definition}
\providecommand{\lemmaname}{Lemma}
\providecommand{\theoremname}{Theorem}
\begin{document}
\title{Topological Langmuir-cyclotron wave}
\author{Hong Qin}
\email{hongqin@princeton.edu }

\author{Yichen Fu}
\email{yichenf@princeton.edu}

\affiliation{Princeton Plasma Physics Laboratory and Department of Astrophysical
Sciences, Princeton University, Princeton, NJ 08540}
\begin{abstract}
A theoretical framework is developed to describe the Topological Langmuir-Cyclotron
Wave (TLCW), a recently identified topological surface excitation
in magnetized plasmas. As a topological wave, the TLCW propagates
unidirectionally without scattering in complex boundaries. The TLCW
is studied theoretically as a spectral flow of the Hamiltonian Pseudo-Differential-Operator
(PDO) $\hat{H}$ for waves in an inhomogeneous plasma. The semi-classical
parameter of the Weyl quantization for plasma waves is identified
to be the ratio between electron gyro-radius and the inhomogeneity
scale length of the system. Hermitian eigenmode bundles of the bulk
Hamiltonian symbol $H$ for plasma waves are formally defined. Because
momentum space in classical continuous media is contractible in general,
the topology of the eigenmode bundles over momentum space is trivial.
This is in stark contrast to condensed matters. Nontrivial topology
of the eigenmode bundles in classical continuous media only exists
over phase space. A boundary isomorphism theorem is established to
facilitate the calculation of Chern numbers of eigenmode bundles over
non-contractible manifolds in phase space. It also defines a topological
charge of an isolated Weyl point in phase space without adopting any
connection. Using these algebraic topological techniques and an index
theorem formulated by Faure, it is rigorously proven that the nontrivial
topology at the Weyl point of the Langmuir wave-cyclotron wave resonance
generates the TLCW as a spectral flow. It is shown that the TLCW can
be faithfully modeled by a tilted Dirac cone in phase space. An analytical
solution of the entire spectrum of the PDO of a generic tilted phase
space Dirac cone, including its spectral flow, is given. The spectral
flow index of a tilted Dirac cone is one, and its mode structure is
a shifted Gaussian function.
\end{abstract}
\maketitle

\section{Introduction \label{sec:Introduction}}

Topological wave in classical continuous media is an active research
topic for its practical importance. For example, it was discovered
\citep{delplace2017topological,Faure2019} that the well-known equatorial
Kelvin wave, which can trigger an El Nino episode \citep{Roundy2007},
is a topological wave in nature. Topological waves in cold magnetized
plasmas have been recently studied \citep{parker2020topological,fu2021topological}.
Through comprehensive numerical simulations and heuristic application
of the principle of bulk-edge correspondence, a topological surface
excitation called Topological Langmuir-Cyclotron Wave (TLCW) was identified
\citep{fu2021topological,Fu2022a}. As a topological wave, the TLCW
has topological robustness, i.e., it is unidirectional and free of
scattering and reflection. Thus, the TLCW excitation is expected to
be experimentally observable. 

In the present study, we develop a theoretical framework to describe
the TLCW and rigorously prove that it is produced by the nontrivial
topology at the Weyl point due to the Langmuir wave-cyclotron wave
resonance using an index theorem for spectral flows formulated by
Faure \citep{Faure2019} and tools of algebraic topology. Most of
the techniques developed are applicable to general topological waves
in classical continuous media as well. The key developments of the
present study are summarized as follows. 
\begin{enumerate}
\item The TLCW is theoretically described as a spectral flow of the global
Hamiltonian Pseudo-Differential-Operator (PDO) $\hat{H}$ for waves
in an inhomogeneous magnetized plasma. For this problem, the semi-classical
parameter of the Weyl quantization operator, which maps the bulk Hamiltonian
symbol $H$ to $\hat{H}$, is identified as the ratio between electron
gyro-radius and the scale length of the inhomogeneity. We emphasize
the important role of the semi-classical parameters and the necessity
to identify them for topological waves in classical continuous media
according to the nature of the physics under investigation. 
\item We formally construct the Hermitian eigenmode bundles of the bulk
Hamiltonian symbol $H$, and show that it is the topology of the eigenmode
bundles over non-contractible manifolds in phase space that determines
the properties of spectral flows for classical continuous media. We
show that the topology of eigenmode bundles on momentum (wavenumber)
space is trivial in classical continuous media, unlike in condensed
matters. Without modification, the Atiyah-Patodi-Singer (APS) index
theorem \citep{Atiyah1976} proved for spectral flows over $S^{1}$
is only applicable to condensed matters, and Faure's index theorem
\citep{Faure2019} for spectral flows over $\mathbb{R}$-valued wavenumbers
should be adopted for classical continuous media. 
\item A boundary isomorphism theorem (Theorem \ref{thm:BoundaryIso}) is
proved to facilitate the calculation of Chern numbers of eigenmode
bundles over a 2D sphere in phase space. The theorem also defines
a topological charge of an isolated Weyl point in phase space using
a topological method, i.e., without using any connection. 
\item An analytical solution of the global Hamiltonian PDO of a generic
tilted Dirac cone in phase space is found, which generalizes the previous
result for a straight Dirac cone \citep{Faure2019}. The spectral
flow index of a tilted phase space Dirac cone is calculated to be
one, and the mode structure of the spectral flow is found to be a
shifted Gaussian function.
\item These tools are applied to prove the existence of the TLCW in magnetized
plasmas with the spectral flow index being one. The Chern theorem
(Theorem \ref{thm:Chern}), instead of the Berry connection or any
other connection, was used to calculate the Chern numbers. And it
is shown that the TLCW can be faithfully described by a tilted Dirac
cone in phase space. 
\end{enumerate}
The paper is organized as follows. In Sec.\,\ref{sec:Problem-statement},
we pose the problem to be studied and describe the general properties
of the TLCW identified by numerical simulations. Section \ref{sec:Numerial}
presents additional numerical evidence and simulation results of the
TLCW. In Sec.\,\ref{sec:Topology}, we define the Hermitian eigenmode
bundles of waves in classical continuous media and develop algebraic
topological tools to study the nontrivial topology of eigenmode bundles
over phase space. The existence of TLCW as a spectral flow is proven
in Sec.\,\ref{sec:TLCWPrediction}. We construct a tilted Dirac cone
model for the TLCW in Sec.\,\ref{sec:AnalyticalTDC}, and the entire
spectrum of the PDO of a generic tilted Dirac cone, including its
spectral flow, is solved analytically.

\section{Problem statement and general properties of TLCW \label{sec:Problem-statement}}

We first pose the problem to be addressed in the present study, introduce
the governing equations, set up the class of equilibrium plasmas that
might admit the TLCW, and describe its general properties.

Consider a cold magnetized plasma with fixed ions. The equilibrium
magnetic field $\boldsymbol{B}_{0}=B_{0}\boldsymbol{e}_{z}$ is assumed
to be constant. Because the plasma is cold, any density profile $n(\boldsymbol{r})$
is an admissible equilibrium. Denote by $L\sim\left|n/\nabla n\right|$
the characteristic scale length of $n$. There is no equilibrium electrical
field and electron flow velocity, i.e., $\boldsymbol{v}_{0}=0$ and
$\boldsymbol{E}_{0}=0.$ The linear dynamics of the system is described
by the following equations for the perturbed electromagnetic field
$\boldsymbol{E}$ and $\boldsymbol{B}$, and the perturbed electron
flow $\boldsymbol{v}$. 
\begin{align}
 & \partial_{t}\boldsymbol{v}=-e\boldsymbol{E}/m_{e}-\Omega\boldsymbol{v}\times\boldsymbol{e}_{z},\label{eq:basic1}\\
 & \partial_{t}\boldsymbol{E}=c\nabla\times\boldsymbol{B}+4\pi en\boldsymbol{v},\label{eq:basic2}\\
\  & \partial_{t}\boldsymbol{B}=-c\nabla\times\boldsymbol{E},\label{eq:basic3}
\end{align}
where $\Omega=eB_{0}/m_{e}c$ is the cyclotron frequency, $m_{\mathrm{e}}$
is the electron mass, and $e>0$ is the elementary charge. We normalize
$\boldsymbol{v}$ by $1/\sqrt{4\pi n(\boldsymbol{r})m_{\mathrm{e}}}$,
$t$ by $1/\Omega$, $\boldsymbol{r}$ by $L$, and $\nabla$ by $1/L$.
In the normalized variables, Eqs.\,(\ref{eq:basic1})-(\ref{eq:basic3})
can be written as 
\begin{align}
\mathrm{i}\partial_{t}\psi & =\hat{H}\psi,\\
\psi & =\left(\begin{array}{c}
\boldsymbol{v}\\
\boldsymbol{E}\\
\boldsymbol{B}
\end{array}\right),\\
\hat{H} & (\boldsymbol{r},-\mathrm{i}\eta\nabla)=\begin{pmatrix}\mathrm{i}\boldsymbol{e}_{z}\times & -\mathrm{i}\omega_{\text{p}} & 0\\
\mathrm{i}\omega_{\text{p}} & 0 & \mathrm{i}\eta\nabla\times\\
0 & -\mathrm{i}\eta\nabla\times & 0
\end{pmatrix},\label{eq:Hhat}
\end{align}
where $\mathrm{i}\boldsymbol{e}_{z}\times$ and $\mathrm{i}\eta\nabla\times$
denote $3\times3$ anti-symmetric matrices corresponding to $\boldsymbol{e}_{z}$
and $\nabla$, respectively. For a generic vector $\boldsymbol{u}=(u_{x},u_{y},u_{z})$
in $\mathbb{R}^{3},$ the corresponding $3\times3$ anti-symmetric
matrix is defined to be 
\begin{equation}
\boldsymbol{u}\times\equiv\begin{pmatrix}0 & -u_{z} & u_{y}\\
u_{z} & 0 & -u_{x}\\
-u_{y} & u_{x} & 0
\end{pmatrix}.
\end{equation}
In Eq.\,(\ref{eq:Hhat}), $\omega_{\mathrm{p}}(\boldsymbol{r})=\sqrt{4\pi n(\boldsymbol{r})e^{2}/m_{\mathrm{e}}}/\Omega$
is the local plasma frequency normalized by $\Omega,$ and $\eta\equiv c/(L\Omega)\sim\rho_{e}/L$
is a dimensionless parameter proportional to the ratio between electron
gyro-radius and the scale length of $n$. Here, $\eta$ is assumed
to be small, i.e., $\eta\ll1$, and it plays the role of the semi-classical
parameter for the Weyl quantization of this problem.

The Weyl quantization operator 
\begin{equation}
\mathrm{Op}_{\eta}:f\rightarrow\hat{f}=\mathrm{Op}_{\eta}(f)
\end{equation}
maps a function in phase space $f(\boldsymbol{r},\boldsymbol{k})$,
called a symbol, to an Pseudo-Differential-Operator (PDO) $\hat{f}$
on functions $\psi(\boldsymbol{r})$ on the $n$-dimensional configuration
space. The operator $\hat{f}=\mathrm{Op}_{\eta}(f)$ is defined by
\begin{equation}
\hat{f}\psi(\boldsymbol{r})=\frac{1}{(2\pi\eta)^{n}}\int f\left(\frac{\boldsymbol{r}+\boldsymbol{s}}{2},\boldsymbol{k}\right)\exp\left(\frac{\mathrm{i}\boldsymbol{k}\cdot\left(\boldsymbol{x}-\boldsymbol{y}\right)}{\eta}\right)\psi(\boldsymbol{s})\mathrm{d}\boldsymbol{s}\mathrm{d}\boldsymbol{k}\thinspace.
\end{equation}
In particular, we have $\hat{\boldsymbol{k}}=-\mathrm{i}\eta\nabla$.

For the $\hat{H}$ given by Eq.\,(\ref{eq:Hhat}), its pre-image
$H$, i.e., the symbol $H$ satisfying $\hat{H}=\mathrm{Op}_{\eta}(H)$,
is 
\begin{equation}
H(\boldsymbol{r},\boldsymbol{k})=\begin{pmatrix}\mathrm{i}\boldsymbol{e}_{z}\times & -\mathrm{i}\omega_{\text{p}} & 0\\
\mathrm{i}\omega_{\text{p}} & 0 & -\boldsymbol{k}\times\\
0 & \boldsymbol{k}\times & 0
\end{pmatrix}.\label{eq:H}
\end{equation}

In quantum theory, the semi-classical parameter is typically the Plank
constant $\hbar$, and it is a crucial parameter in the index theorems
for spectral flow \citep{Atiyah1976,Faure2019} of PDOs. For the plasma
waves in the present study, the semi-classical parameter is identified
to be $\eta\equiv c/(L\Omega)$, which is the ratio between electron
gyro-radius and the scale length of the equilibrium plasma. Notice
that in the PDO $\hat{H}(\boldsymbol{r},-\mathrm{i}\eta\nabla)$ the
differential operator $\nabla$ is normalized by $1/L$, but in the
symbol $H(\boldsymbol{r},\boldsymbol{k})$ the wavenumber $\boldsymbol{k}$
is normalized by $\Omega/c$, thanks to the small semi-classical parameter
$\eta$ strategically placed in the Weyl quantization operator $\mathrm{Op}_{\eta}$.
This structure between the PDO and the symbol is required for the
application of the index theorem of spectral flow \citep{Atiyah1976,Faure2019}.
In the study of other topological properties of classical media, such
as electromagnetic materials \citep{silveirinha2015chern,silveirinha2016bulk,gangaraj2017berry,marciani2020chiral},
fluid systems \citep{delplace2017topological,Faure2019,perrot2019topological,tauber2019bulk,venaille2021wave,zhu2021topology,souslov2019topological,qin2019kelvin,fu2020physics,David2022},
and magnetized plasmas \citep{gao2016photonic,yang2016one,parker2020nontrivial,parker2020topological,parker2021topological,fu2021topological,Fu2022a,Rajawat2022,qin2021spontaneous},
we believe that appropriate semi-classical parameters should also
be carefully determined first based on the specific nature of the
problems under investigation.

In plasma physics, the symbol $H(\boldsymbol{r},\boldsymbol{k})$
is called the local Hamiltonian of the system, but it is known as
the bulk Hamiltonian in condensed matter physics. Thus, in the present
context, the phrases ``bulk modes'' and ``local modes'' have the
same meaning, referring to the spectrum determined by $H(\boldsymbol{r},\boldsymbol{k})$
locally at each $\boldsymbol{r}$ and each $\boldsymbol{k}$ separately.
The spectrum of the PDO $\hat{H}(\boldsymbol{r},-\mathrm{i}\eta\nabla)$
will be called global modes. The edge modes, including topological
edge modes, refer to the global modes of $\hat{H}(\boldsymbol{r},-\mathrm{i}\eta\nabla)$
whose mode structures are non-vanishing only in some narrow interface
regions. It is unfortunate that the phrases ``local modes'' and
``edge modes'', defined in different branches of physics, have very
different meanings.

For a fixed $\boldsymbol{r}$ and a fixed $\boldsymbol{k}$, $H(\boldsymbol{r},\boldsymbol{k})$
is a $9\times9$ Hermitian matrix. Denote its 9 eigenmodes by 
\[
(\omega_{n},\psi_{n}),\thinspace\thinspace n=-4,-3,\cdots,3,4\thinspace,
\]
which are ordered by the value of the eigenfrequencies, i.e., $\omega_{i}\leq\omega_{j}$
for $i<j$. Under this index convention, it can be verified that $\omega_{-n}=-\omega_{n}$
and $\omega_{0}=0$, i.e., the spectrum is symmetric with respect
to the real axis. Plotted in Fig.\,\ref{fig:dispersion_relation}
are the dispersion relations of $\omega_{n}$ $(n=1,2,3,4)$ for an
over-dense and an under-dense plasma, respectively. The eigenfrequencies
are plotted as functions of $k_{z}$ and $k_{y}$ only since the spectrum
is invariant when $\boldsymbol{k}$ rotates in the $x$-$y$ plane.

\begin{figure}[ht]
\centering \includegraphics[width=11cm]{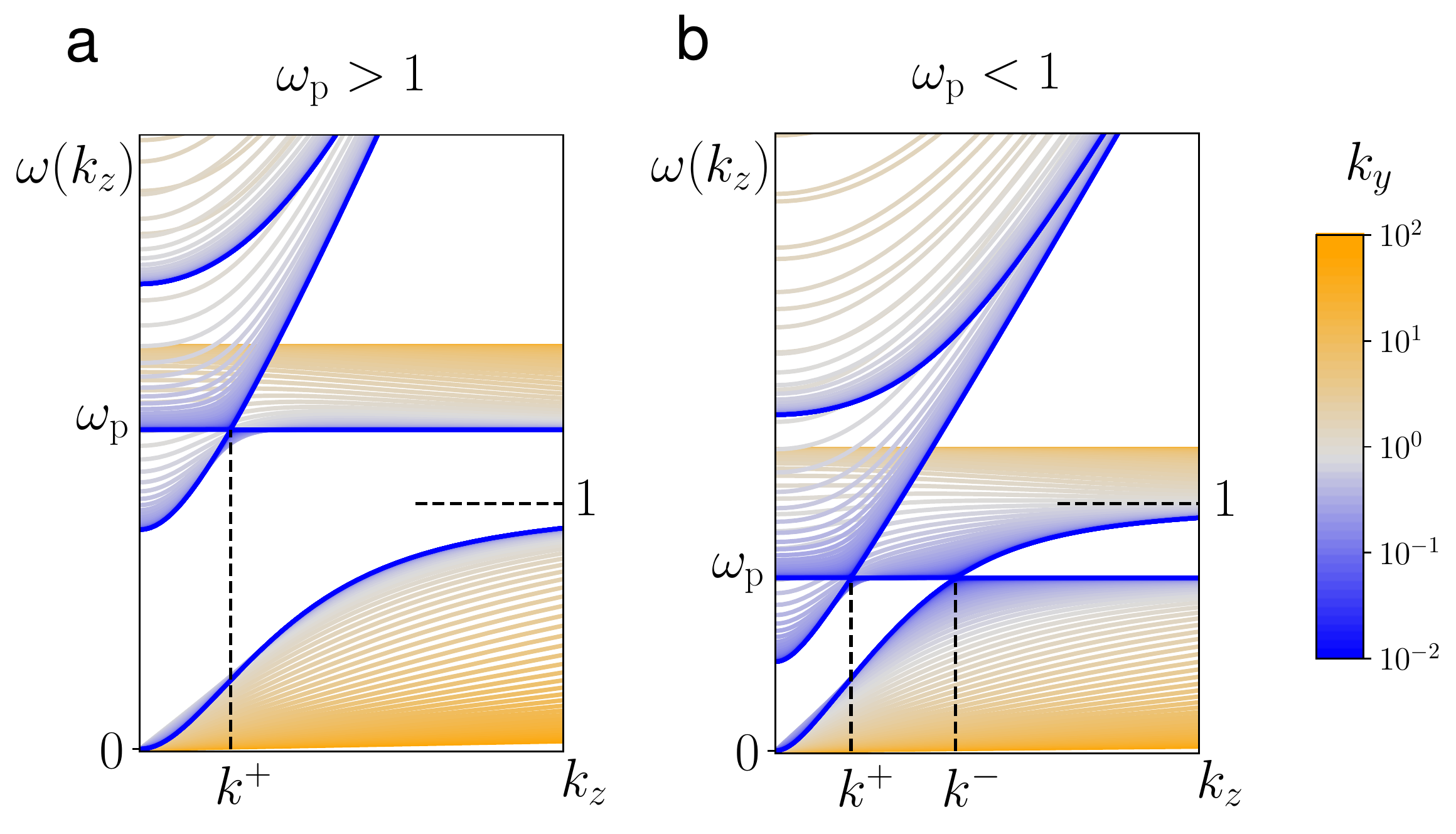} \caption{The dispersion relation $\omega_{n}(k_{z},k_{y})$ $(n=1,2,3,4)$
for (a) an over-dense plasma and (b) an under-dense plasma. Different
values of $k_{y}$ are indicated by the color map.}
\label{fig:dispersion_relation}
\end{figure}

Straightforward analysis shows that for a given $\omega_{\text{p}},$
the spectrum has two possible resonances, a.k.a. Weyl points, when
$\boldsymbol{k}_{\perp}=0$ and $k_{z}=k^{\pm}$, where 
\begin{equation}
k^{\pm}\equiv\dfrac{\omega_{\text{p}}}{\sqrt{1\pm\omega_{\text{p}}}}
\end{equation}
are two critical wavenumbers for the given $\omega_{\text{p}}.$ We
are interested in the resonance at $\boldsymbol{k}_{\perp}=0$ and
$k_{z}=k^{-}$, which is between the Langmuir wave and the cyclotron
wave (the R-wave near the cyclotron frequency). Obviously, this Langmuir-Cyclotron
(LC) resonance or Weyl point exists when and only when the plasma
is under-dense, i.e., $\omega_{\text{p}}<1$.

For a given $k_{z}$, the LC resonance occurs when $\omega_{\text{p}}=\omega_{\text{pc}}$,
where 
\begin{equation}
\omega_{\text{pc}}\equiv\dfrac{\sqrt{k_{z}^{4}+4k_{z}^{2}}-k_{z}^{2}}{2}
\end{equation}
is the critical plasma frequency for the given $k_{z}$. In the parameter
space of $(\omega_{\text{p}},k_{x},k_{y})$ for a fixed $k_{z}$,
when moving away from the LC Weyl point $(\omega_{\text{p}},k_{x},k_{y})=(\omega_{\text{pc}},0,0)$,
the distance between $\omega_{1}$ and $\omega_{2}$ will increase,
i.e., 
\begin{equation}
\omega_{2}-\omega_{1}>0,\,\,\text{when }(\omega_{\text{p}},k_{x},k_{y})\ne(\omega_{\text{pc}},0,0).
\end{equation}
Interesting topological physics happens in the neighborhood of the
LC Weyl point $(\omega_{\text{p}},k_{x},k_{y})=(\omega_{\text{pc}},0,0)$.
Figure \ref{fig:DiracCone} shows the surfaces of $\omega_{1}$ and
$\omega_{2}$ as functions of $\omega_{\text{p}}$ and $k_{x}$ near
the LC Weyl point. The structure is known as a Dirac cone. One important
feature of the Dirac cone at the LC Weyl point is that it is tilted.
Also, note that this tilted Dirac cone is in phase space since $\omega_{\text{p}}$
is a function of $x.$ This is different from condensed matter physics,
where the Dirac cone is mostly in momentum space.

\begin{figure}[ht]
\includegraphics[width=8cm]{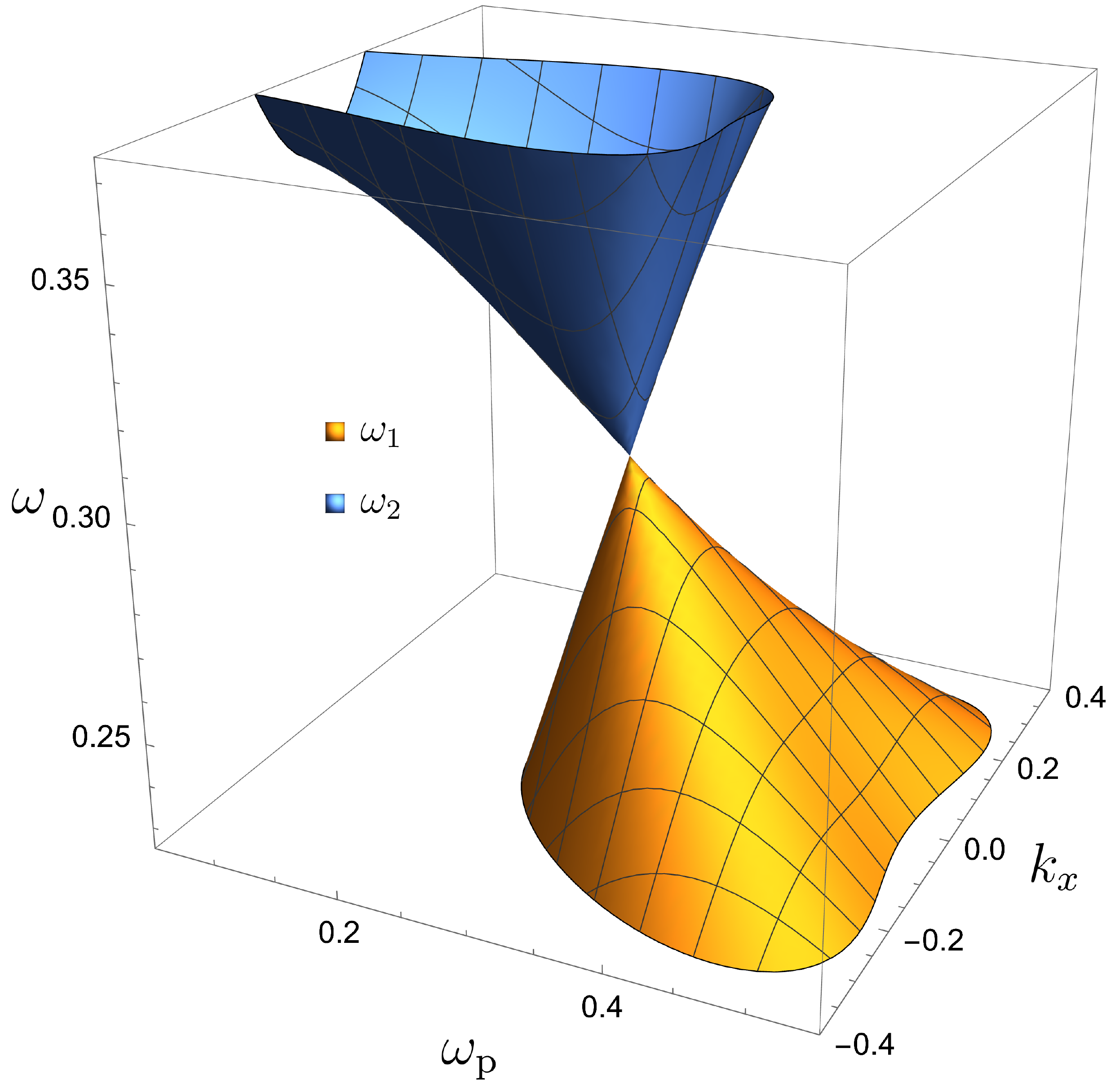} \caption{Tilted phase space Dirac cone in the neighborhood of the LC Weyl point
$(\omega_{\text{p}},k_{x},k_{y})=(\omega_{\text{pc}},0,0)$. }
\label{fig:DiracCone}
\end{figure}

Previous numerical studies and qualitative consideration \citep{fu2021topological,Fu2022a}
indicated that for a given $k_{z},$ a simple 1D equilibrium that
is inhomogeneous in the $x$-direction will admit the TLCW if the
range of $\omega_{\text{p}}(x)$ includes $\omega_{\text{pc}}.$ In
particular, we will consider the equilibrium profile displayed in
Fig.\,\ref{fig:1DGeometry}. The profile is homogeneous in Region
I $(x\le-1)$ and Region II $(x\ge1)$, and $\omega_{\text{p}}(x)$
monotonically decrease in the transition region $(-1\le x\le1)$.
The profile of $\omega_{\text{p}}(x)$ satisfies the condition 
\begin{gather}
\omega_{\text{p}1}>\omega_{\text{p}}(0)=\omega_{\text{pc}}>\omega_{\text{p}2}\thinspace,\label{eq: condition1}\\
\omega_{\text{p}1}\equiv\omega_{\text{p}}(x\le-1)\,,\\
\omega_{\text{p}2}\equiv\omega_{\text{p}}(x\ge1).
\end{gather}
The LC Weyl point locates at $x=0,$ and $\omega_{\text{p}1}$ is
the plasma frequency of Region I and $\omega_{\text{p}2}$ that of
Region II. Note that here $x$ is the dimensionless length normalized
by $L$, the scale length of equilibrium density profile $\omega_{\text{p}}(x)$.

\begin{figure}[ht]
\centering \includegraphics[width=8cm]{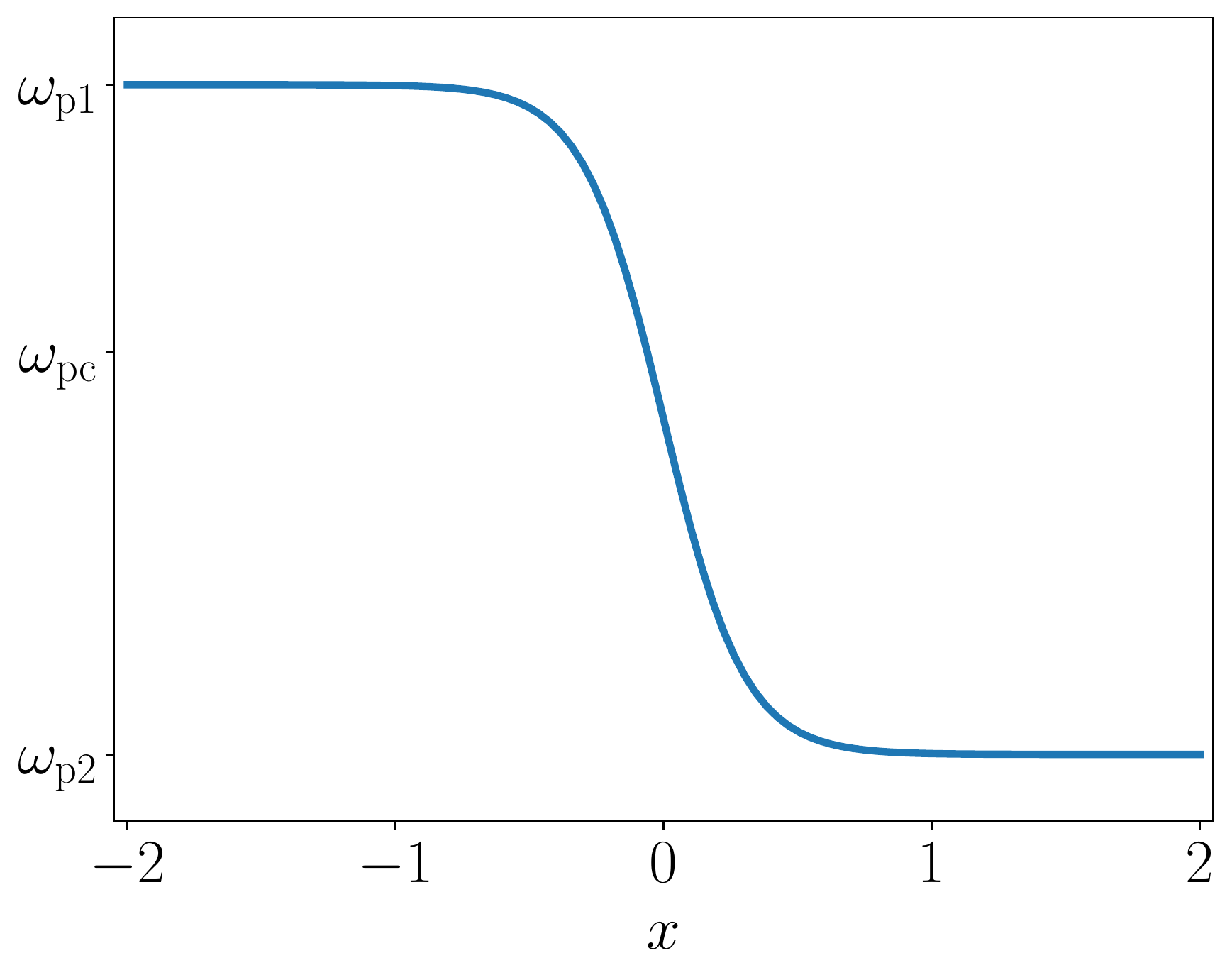} \caption{One-dimensional equilibrium with one transition region. The $x$ coordinate
has been normalized by $L,$ the scale length of $n(x)$.}
\label{fig:1DGeometry}
\end{figure}

In the following analysis, we will assume $k_{z}$ is a fixed parameter
unless explicitly stated otherwise.

Through the variation of $\omega_{\text{p}}(x)$, the spectrum $\omega_{n}(x,k_{x},k_{y})$
of the bulk Hamiltonian symbol $H(\boldsymbol{r},\boldsymbol{k})$
becomes a function of $x$. We defined the common gap condition for
the spectra $\omega_{1}(x,k_{x},k_{y})$ and $\omega_{2}(x,k_{x},k_{y})$
as follows. 
\begin{defn}
\label{def:CommGap} The spectra $\omega_{1}(x,k_{x},k_{y})$ and
$\omega_{2}(x,k_{x},k_{y})$ are said satisfying the common gap condition
for parameters exterior to the ball $B_{r}^{3}\equiv\left\{ (x,k_{x},k_{y})\mid x^{2}+k_{x}^{2}+k_{y}^{2}\le r^{2}\right\} $
in the phase space of $(x,k_{x},k_{y})$, if there exists an interval
$[g_{1}(r),g_{2}(r)]$ such that $\omega_{1}(x,k_{x},k_{y})<g_{1}(r)$
and $\omega_{2}(x,k_{x},k_{y})>g_{2}(r)$ for all $(x,k_{x},k_{y})\notin B_{r}^{3}.$
We call $[g_{1}(r),g_{2}(r)]$ the common gap of $\omega_{1}(x,k_{x},k_{y})$
and $\omega_{2}(x,k_{x},k_{y})$ for parameters exterior to the ball
$B_{r}^{3}$.

For all the parameter space that we have explored, condition (\ref{eq: condition1})
implies the common gap condition of $\omega_{1}(x,k_{x},k_{y})$ and
$\omega_{2}(x,k_{x},k_{y})$ for $(x,k_{x},k_{y})$ exterior to the
ball of $B_{1}^{3}$. Due to the algebraic complexity of $H(\boldsymbol{r},\boldsymbol{k})$,
this fact cannot be proved through a simple procedure, even though
no contour example was found numerically. In Sec.\,\ref{sec:AnalyticalTDC},
we will give a proof of this fact for a reduced Hamiltonian corresponding
to a tilted Dirac cone in the neighborhood of the LC Weyl point. In
the analysis before Sec.\,\ref{sec:AnalyticalTDC}, we will take
the common gap condition as an assumption.

For the 1D equilibrium with inhomogeneity in the $x$-direction, $k_{y}$
and $k_{z}$ are good quantum numbers and can be treated as system
parameters. The PDO $\hat{H}(\boldsymbol{r},-\mathrm{i}\eta\nabla)$
defined in Eq.\,(\ref{eq:Hhat}) reduces to 
\begin{equation}
\hat{H}(x,-\mathrm{i}\eta\partial_{x},k_{y},k_{z})=\begin{pmatrix}\mathrm{i}\boldsymbol{e}_{z}\times & -\mathrm{i}\omega_{\text{p}}(x) & 0\\
\mathrm{i}\omega_{\text{p}}(x) & 0 & (\mathrm{i}\eta\partial_{x},-k_{y,}-k_{z})\times\\
0 & (-\mathrm{i}\eta\partial_{x},k_{y,}k_{z})\times & 0
\end{pmatrix},\label{eq:Hhatx}
\end{equation}
and the corresponding bulk Hamiltonian symbol is 
\begin{equation}
H(x,k_{x},k_{y},k_{z})=\begin{pmatrix}\mathrm{i}\boldsymbol{e}_{z}\times & -\mathrm{i}\omega_{\text{p}}(x) & 0\\
\mathrm{i}\omega_{\text{p}}(x) & 0 & (-k_{x},-k_{y,}-k_{z})\times\\
0 & (k_{x},k_{y,}k_{z})\times & 0
\end{pmatrix},\label{eq:Hx}
\end{equation}
In Region I or II, the system is homogeneous, and in each region separately
it is valid to speak of the homogeneous eigenmodes of $\hat{H}(x,-\mathrm{i}\eta\partial_{x},k_{y},k_{z})$,
which are identical to the bulk modes of $H(x,k_{x},k_{y},k_{z})$
in that region.

The TLCW is a global eigenmode of $\hat{H}(x,-\mathrm{i}\eta\partial_{x},k_{y},k_{z})$
localized in the transition region of $-1<x<1$. Hence the name of
edge mode. In Fig.\,\ref{fig:1Dspectrum}, the numerically calculated
spectrum of $\hat{H}(x,-\mathrm{i}\eta\partial_{x},k_{y},k_{z})$
is plotted as a function of $k_{y}$. The spectrum consists of three
parts. The upper and lower parts are the spectrum of $\hat{H}(x,-\mathrm{i}\eta\partial_{x},k_{y},k_{z})$
that fall in the bulk bands of $H(x,k_{x},k_{y},k_{z})$ in Regions
I and II. The spectrum in the middle is a single line trespassing
the common band gap shared by Regions I and II. It is the TLCW. Its
frequency increases monotonically with $k_{y}$, passing through $\omega_{\text{pc}}$.
Such a curve of the dispersion relation for the edge mode as a function
of $k_{y}$ is known as a spectral flow because it ships one eigenmode
of $H(x,k_{x},k_{y},k_{z})$ from the lower band to the upper band
across the band gap (see Fig.\,\ref{fig:1Dspectrum}a). If there
were two or more edge modes in the gap as in the case of oceanic equatorial
waves \citep{delplace2017topological,Faure2019}, there would be two
or more spectral flows. In Sec.\,\ref{sec:TLCWPrediction}, we will
formally define spectral flow and show that the number of spectral
flow reflects the topology of the plasma waves and is determined by
a topological index known as the Chern number of a properly chosen
manifold in the parameter space. This is why they are called topological
edge modes. For the TLCW, we will show that its Chern number is one. 
\end{defn}

\begin{figure}[ht]
\centering \includegraphics[height=8cm]{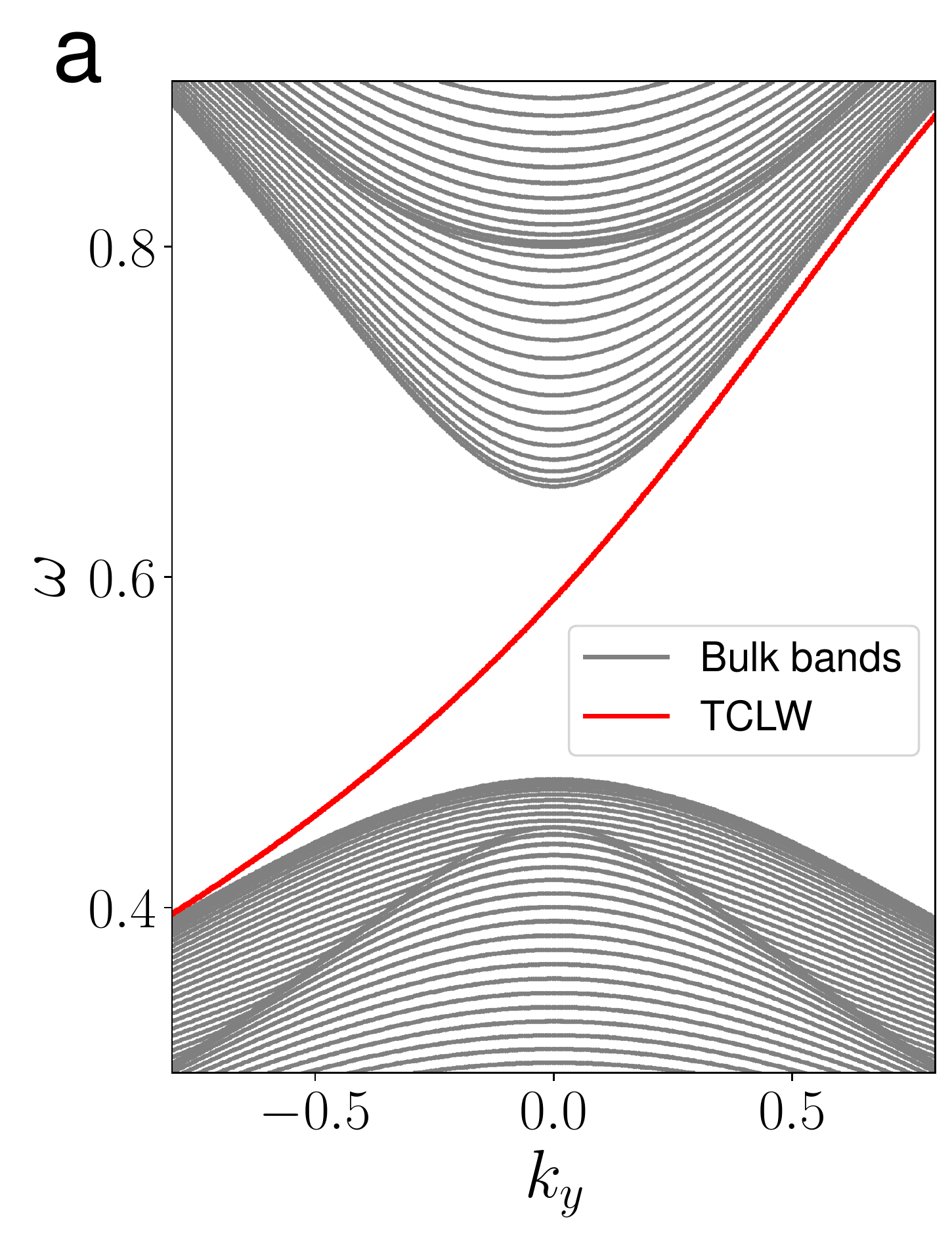} \hspace{1cm} \includegraphics[height=8cm]{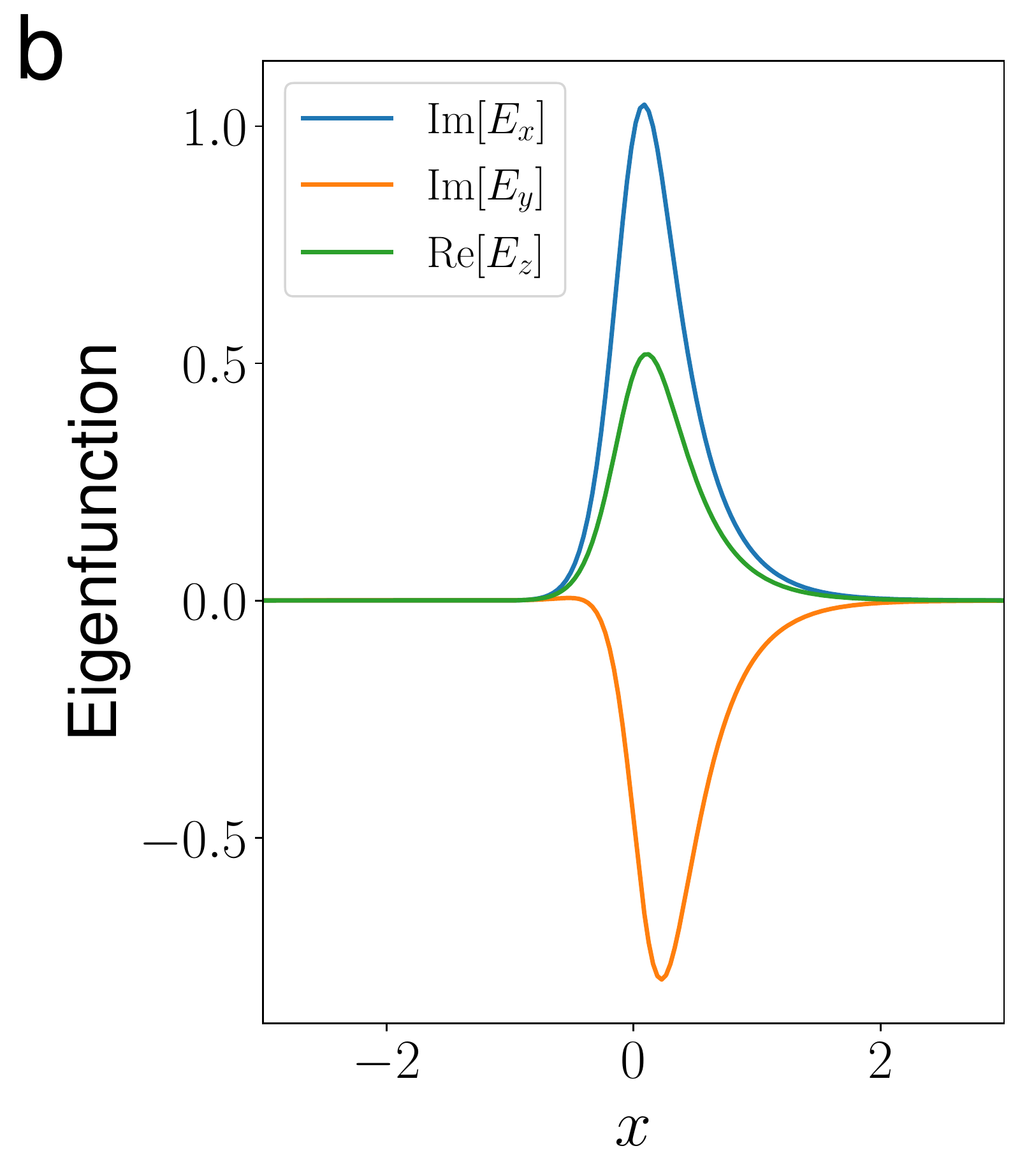}
\caption{(a) Spectrum of $\hat{H}(x,-\mathrm{i}\eta\partial_{x},k_{y},k_{z})$
as a function of $k_{y}$. (b) The mode structure of the TLCW.}
\label{fig:1Dspectrum}
\end{figure}

Condition (\ref{eq: condition1}) was identified as that for the existence
of the TLCW \citep{fu2021topological,Fu2022a} by heuristically applying
the bulk-edge correspondence using the numerically integrated values
of the Berry curvature over the $k_{x}$-$k_{y}$ plane for the bulk
modes of $H(x,k_{x},k_{y},k_{z})$ in Regions I and II. However, such
an integral should not be used as a topological index for the topology
waves in classical continuous media, because the topology of vector
bundles over a contractible base manifold is trivial, and the $k_{x}$-$k_{y}$
plane is contractible. A detailed discussion about the trivial topology
over the momentum space for classical continuous media can be found
in Sec.\,\ref{subsec:TrivialTop}.

In Secs.\,\ref{sec:Topology}-\ref{sec:AnalyticalTDC}, we show how
to formulate the bulk-edge correspondence for this problem in the
classical continuous media, using an index theorem of spectral flow
over wavenumbers taking values in $\mathbb{R}$ established by Faure
\citep{Faure2019} and techniques of algebraic topology. We rigorously
prove that there exists one TLCW when condition (\ref{eq: condition1})
and the common band gap condition are satisfied. After presenting
additional numerical evidence of the TLCW in the next section, we
will start our analytical study in Sec.\,\ref{sec:Topology} by defining
the Hermitian eigenmode bundle of plasma waves, with which the index
theorem is concerned with.

\section{Additional numerical evidence of TLCW \label{sec:Numerial}}

In this section, we display several more examples of numerically calculated
TLCW by a 1D eigenmode solver of $\hat{H}(x,-\mathrm{i}\eta\partial_{x},k_{y},k_{z})$
\citep{fu2021topological} as well as 3D time-dependent simulations
\citep{Fu2022a}.

The first example is the TLCW in a 1D equilibrium with two LC Wely
points, as illustrated in Fig.\,\ref{fig:1DGeometry2Edge}. The high-density
region is in the middle and the low-density region is on the two sides.
When condition (\ref{eq: condition1}) is satisfied, we expect to
observe two TLCWs, one on the right LC Weyl point and one on the left.
The numerically solved spectrum of $\hat{H}(x,-\mathrm{i}\eta\partial_{x},k_{y},k_{z})$
is shown in Fig.\,\ref{fig:1Dspectrum2Edege}, which meets the expectation
satisfactorily. 

\begin{figure}[ht]
\centering \includegraphics[width=9cm]{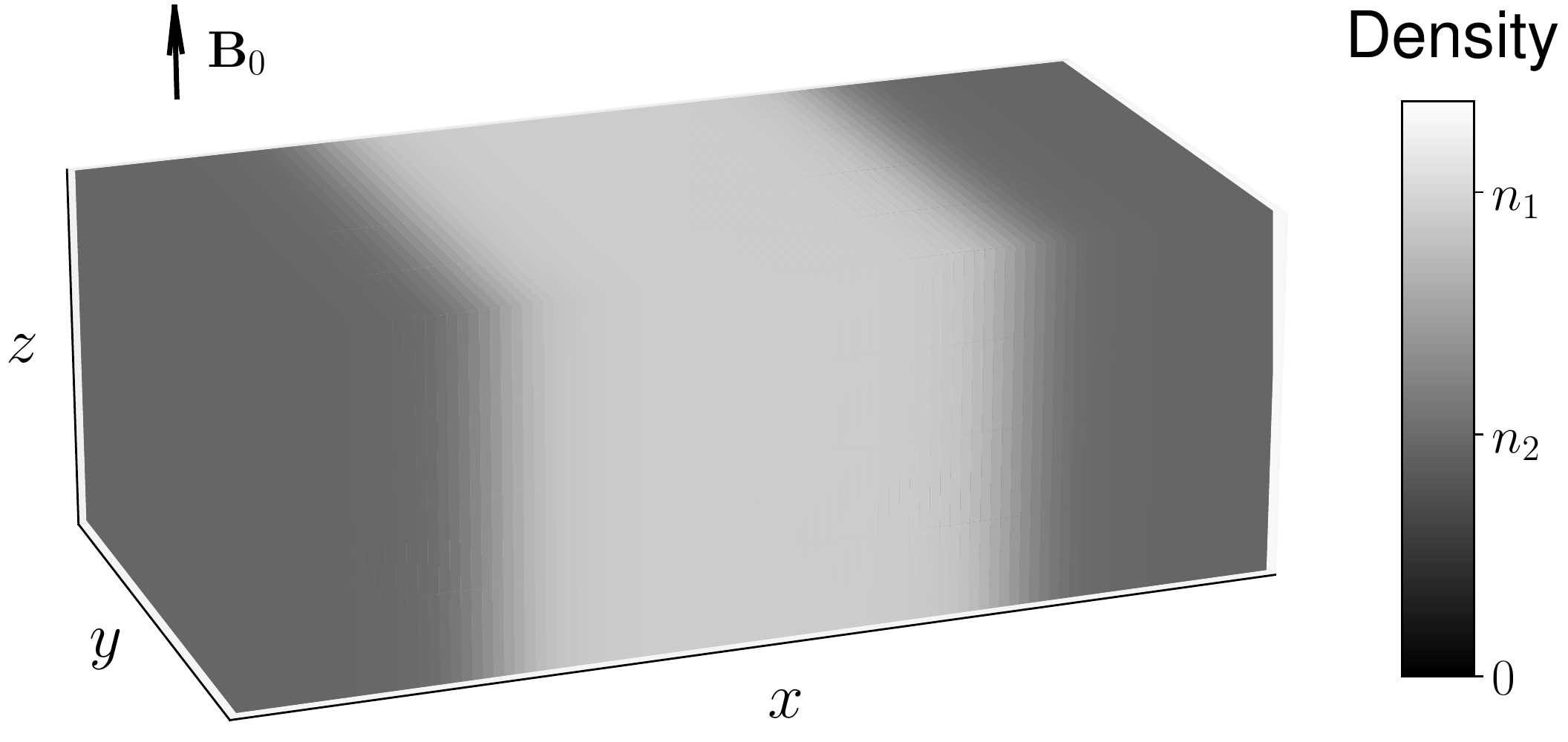} \caption{One-dimensional equilibrium with two Weyl points, located at the two
regions where density changes.}
\label{fig:1DGeometry2Edge}
\end{figure}

\begin{figure}[ht]
\centering \includegraphics[height=8cm]{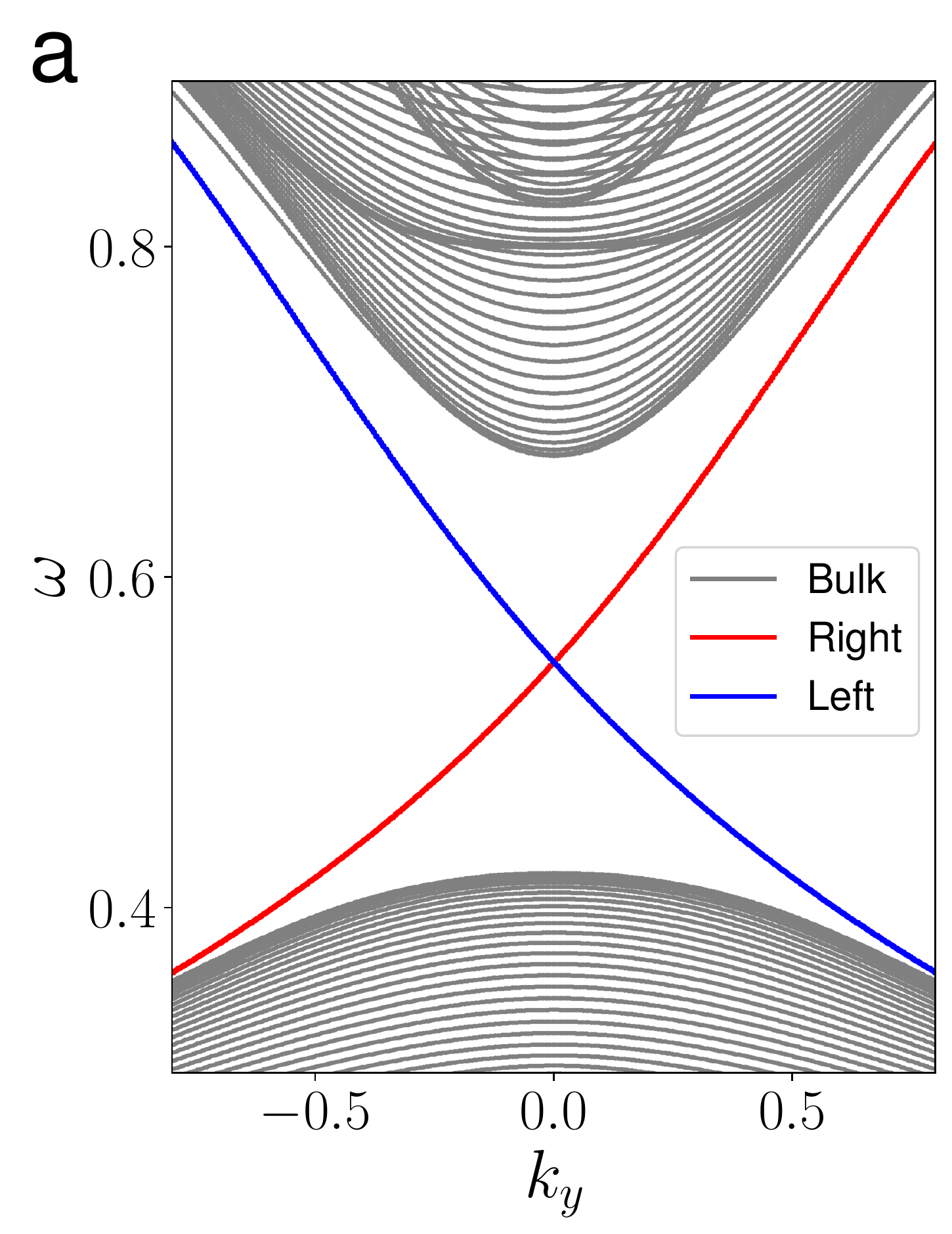} \hspace{0.5cm}
\includegraphics[height=8cm]{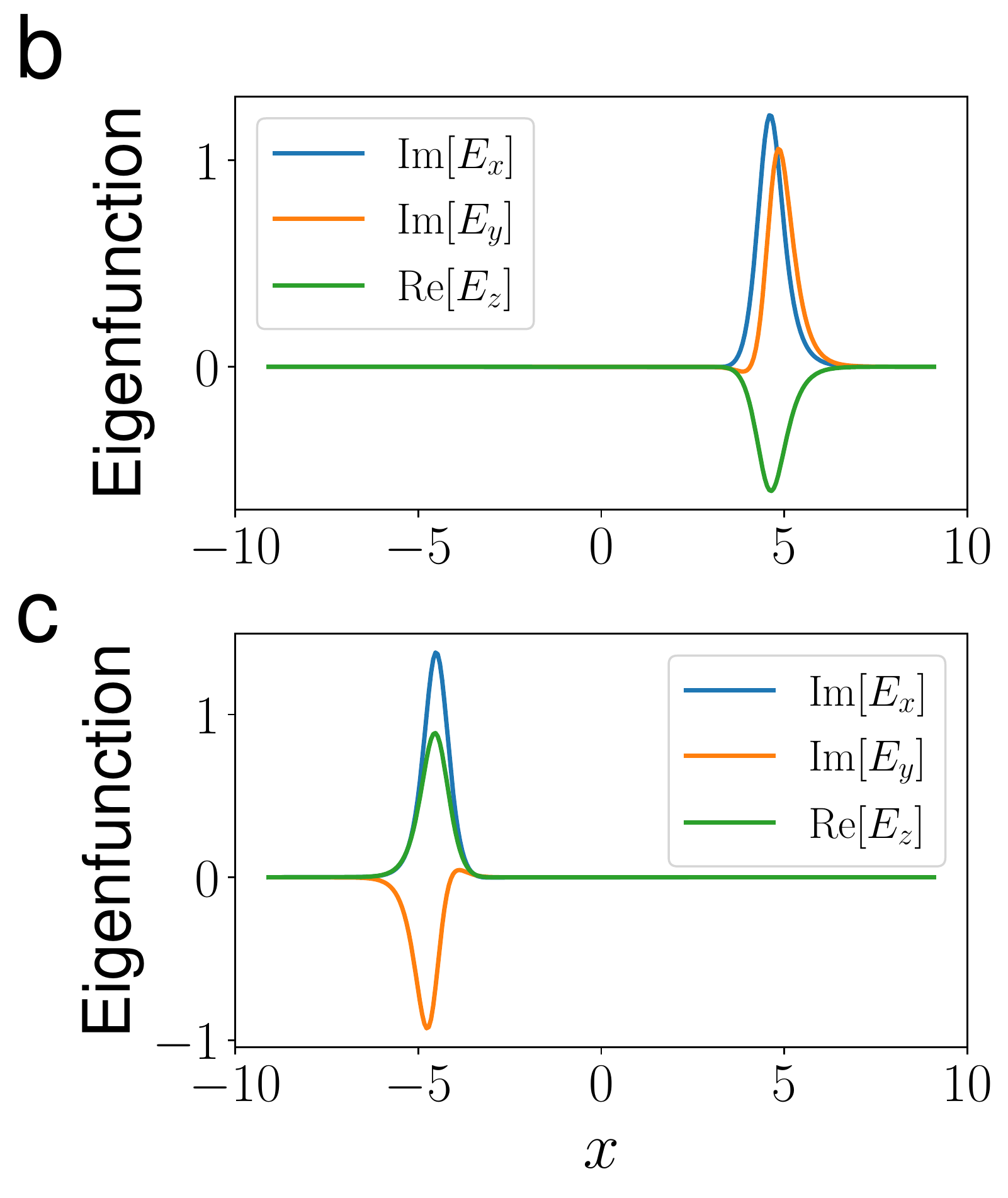} \caption{(a) Spectrum of $\hat{H}(x,-\mathrm{i}\eta\partial_{x},k_{y},k_{z})$
as a function of $k_{y}$. (b) The mode structure of the right TLCW.
(c) The mode structure of the left TLCW.}
\label{fig:1Dspectrum2Edege}
\end{figure}

Shown in Figs.\,\ref{fig:zigzag} and \ref{fig:oval} are 3D simulations
of the TLCW, where the boundary between two regions are nontrivial
curves in a 2D plane. In the simulations, an electromagnetic source
is placed on the boundary marked by the yellow star. For the simulation
in Fig.\,\ref{fig:zigzag}, the boundary is an irregular zigzag line.
As anticipated, the TLCW propagates along the irregular boundary unidirectionally
and without any scattering and reflection by the sharp turns. In Fig.\,\ref{fig:oval},
the boundary is a closed oval, and the TLCW stays on the oval boundary
as expected. The propagation is again unidirectionally and without
any scattering into other modes. Because $\omega_{\mathrm{p,1}}>\omega_{\mathrm{p,2}}$,
the TLCW propagates counterclockwise and carries a non-zero (kinetic)
angular momentum \citep{Fu2022a}. Even though the source does not
carry any angular momentum, an angular-momentum-carrying surface wave
is generated by the mechanism of the TLCW. 

\begin{figure}[ht]
\centering

\includegraphics[width=7cm]{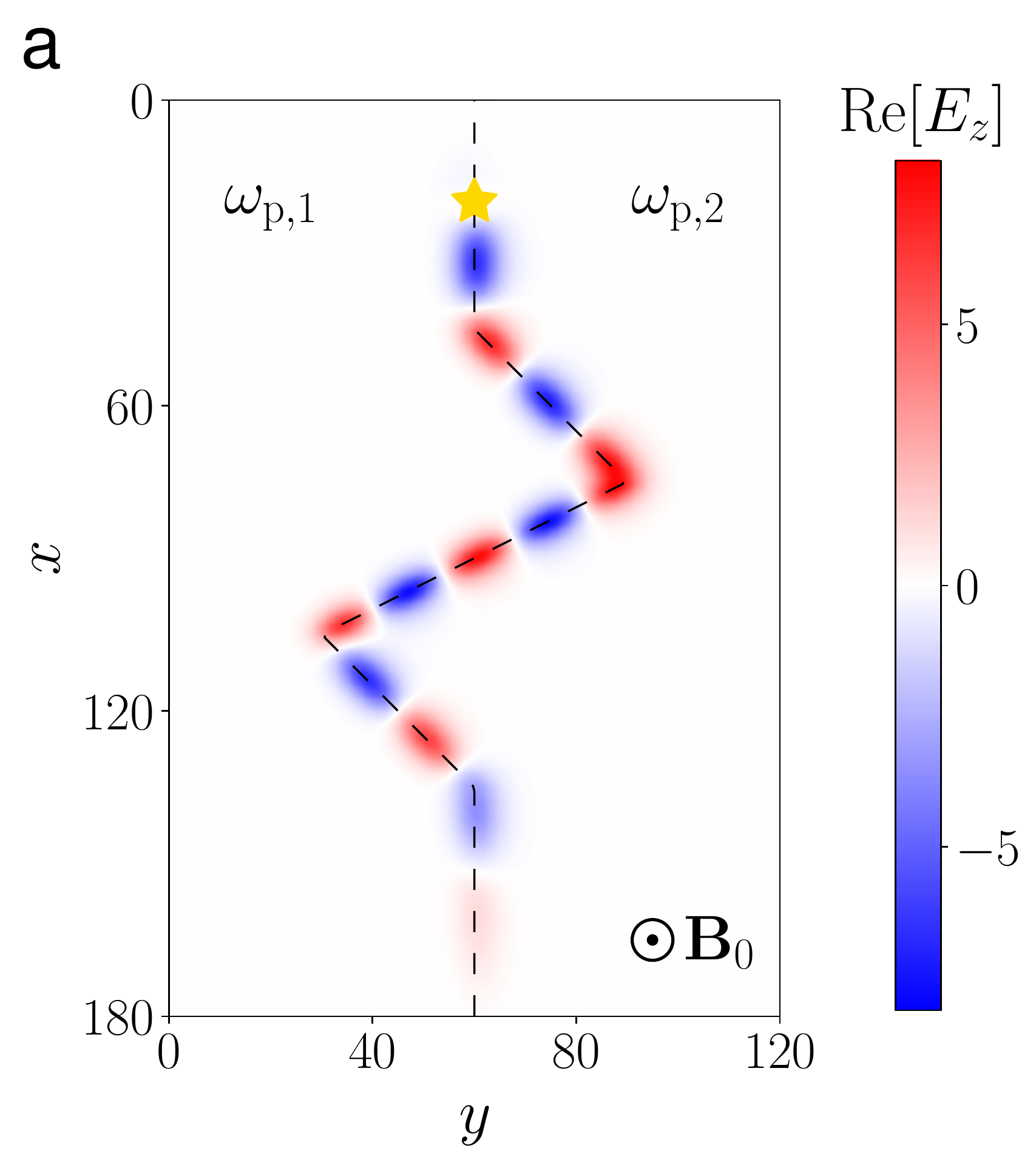}\includegraphics[width=7cm]{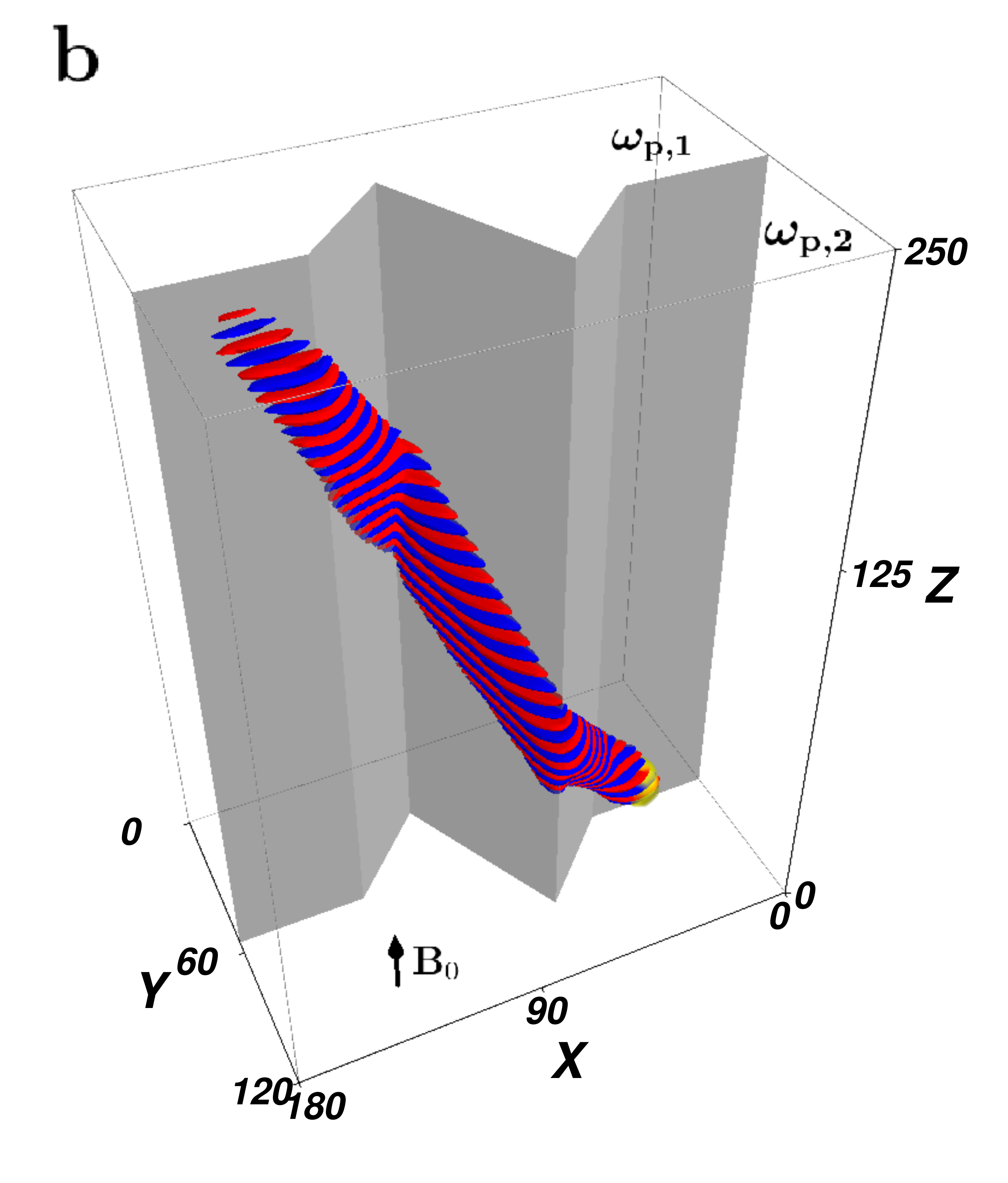}
\caption{(a) 2D and (b) 3D simulations of the TLCW excited on a zig-zag boundary.}
\label{fig:zigzag}
\end{figure}

\begin{figure}[ht]
\centering \includegraphics[height=8cm]{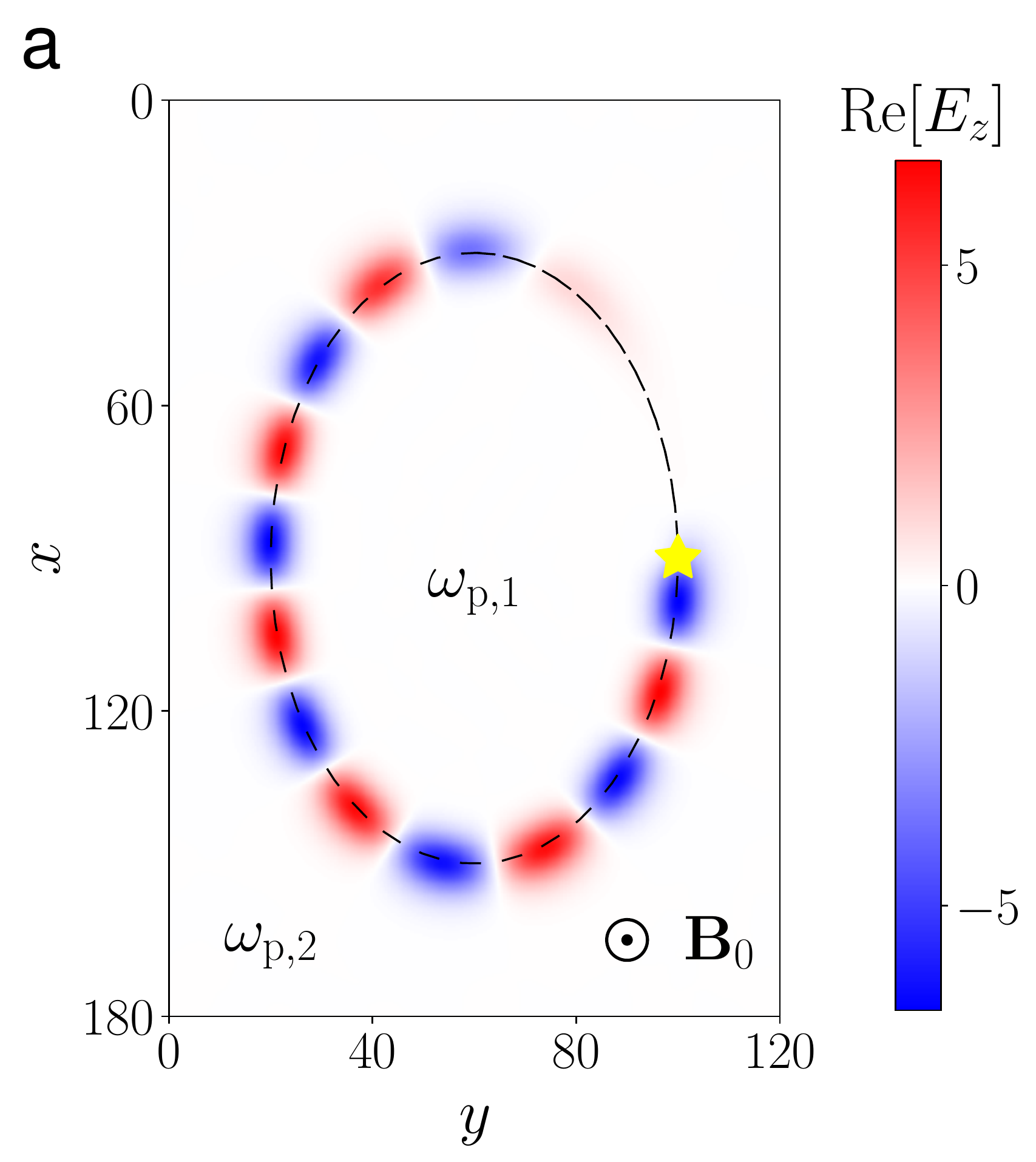} \hspace{1cm} \includegraphics[height=8cm]{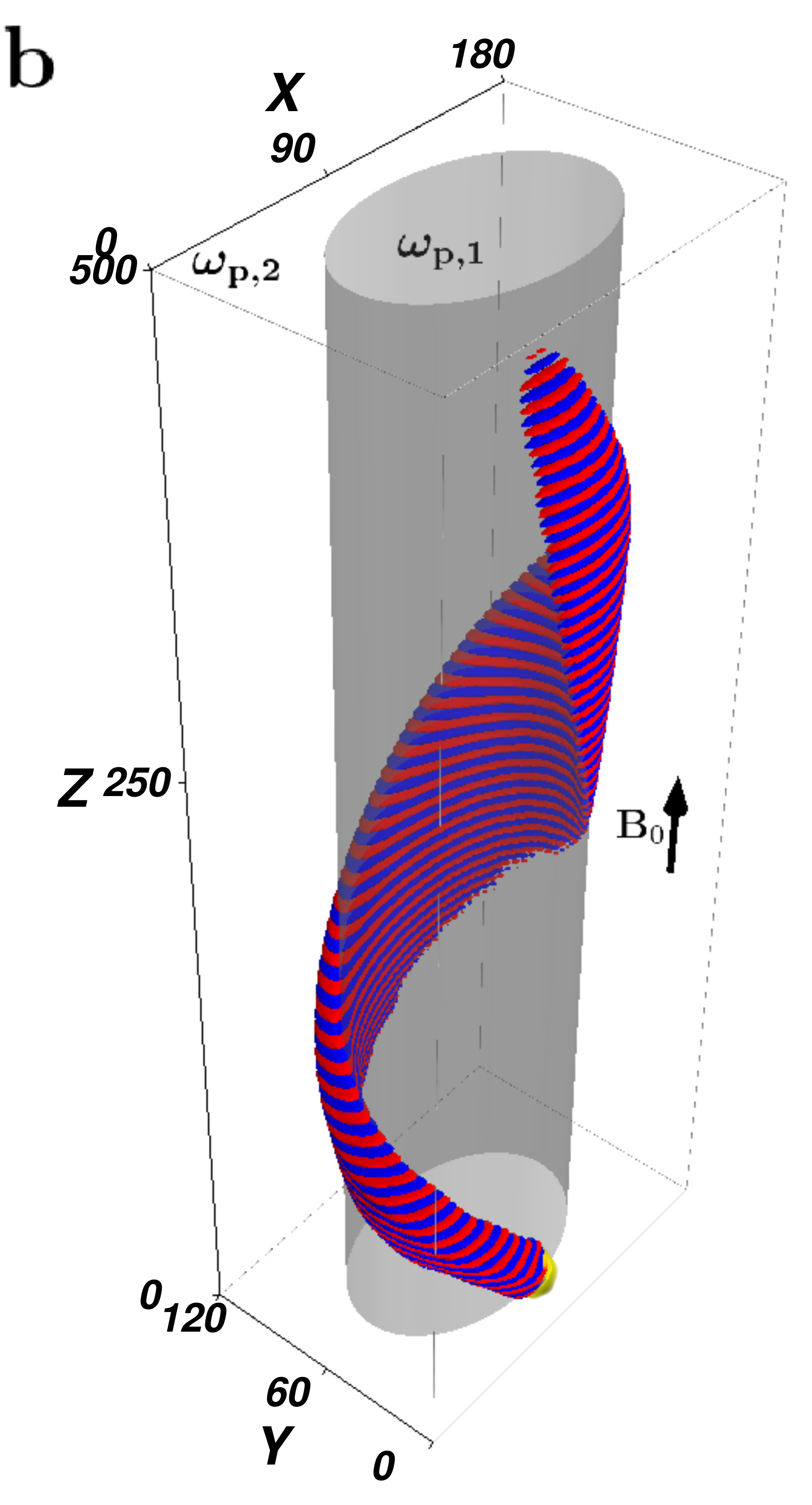}
\caption{(a) 2D and (b) 3D simulations of the TLCW excited on an oval boundary.}
\label{fig:oval}
\end{figure}

\section{Topology of Hermitian eigenmode bundles of plasma waves and waves
in classical continuous media\label{sec:Topology}}

The index theorem \citep{Atiyah1976,Faure2019} establishes the bulk-edge
correspondence linking the nontrivial topology of the bulk mode of
symbol $H$ and the spectral flow of PDO $\hat{H}$. The topology
here refers to that of the Hermitian bundles of eigenmodes over appropriate
regions of the parameter spaces, which we now define.

Denote the parameter space by $M$. In the present context, $M$ is
the space of all possible 4-tuples $(x,k_{x},k_{z},k_{z})$. For a
given $m=(x,k_{x},k_{z},k_{z})\in M$, the bulk Hamiltonian symbol
$H(m)$ supports a finite number of eigenmodes. For the plasma wave
operator defined by Eq.\,(\ref{eq:Hx}), there are 9 eigenmodes as
explained in Sec.\,\ref{sec:Problem-statement}. But, most of the
discussion and results in this section are not specific to the plasma
waves and remain valid for a general bulk Hamiltonian symbol $H(m)$
in continuous media with 1D inhomogeneity. When there is no degeneracy
for a given eigenfrequency, all eigenvectors corresponding to the
eigenfrequency form a 1D complex vector space. 
\begin{defn}
Let $Q\subset M$ be a subset of the parameter space that forms a
manifold with or without boundary. If the $j$-th eigenmode is not
degenerate over $Q$, then the space of disjointed union of all eigenvectors
of $\psi_{j}(q)$ at all $q\in Q$ forms a 1D complex line bundle
\begin{equation}
\pi_{j}:E_{j}\rightarrow Q
\end{equation}
over $Q$. With the standard Hermitian form 
\begin{equation}
\left\langle u,w\right\rangle \equiv u\cdot\bar{w},
\end{equation}
for all $u,w\in\pi_{j}^{-1}(q)$ and $q\in Q,$ $E_{j}\rightarrow Q$
is also a Hermitian bundle. It will be called the Hermitian line bundle
of the $j$-th eigenmode of the bulk Hamiltonian symbol $H(m)$ over
$Q$. 
\end{defn}

If both the $l$-th eigenmode and the $j$-th eigenmode are non-degenerate
over $Q,$ the Whitney sum of $E_{l}\rightarrow Q$ and $E_{j}\rightarrow Q$
defines the Hermitian line bundle of the $l$-th and the $j$-th eigenmodes,
\begin{equation}
E_{\{l,j\}}\equiv E_{l}\oplus E_{j}
\end{equation}
with the Hermitian form defined as 
\begin{equation}
\left\langle \boldsymbol{u},\boldsymbol{v}\right\rangle \equiv u_{l}\cdot\bar{w}_{l}+u_{j}\cdot\bar{w}_{j},
\end{equation}
for all $\boldsymbol{u}=(u_{l},u_{j}),\boldsymbol{w}=(w_{l},w_{j})$,
where $u_{l},w_{l}\in\pi_{l}^{-1}(q)$ and $u_{j},w_{j}\in\pi_{j}^{-1}(q)$.
Similarly, Hermitian bundle of a set of eigenmodes indexed by set
$J$ is defined as 
\begin{equation}
E_{J}\equiv\oplus_{j\in J}E_{j}\thinspace,
\end{equation}
if for each $j\in J,$ the $j$-th eigenmode is not degenerate over
$Q$. In general, $E_{J}$ can be defined when degeneracy exists only
between indices in $J$, but we will not use this structure in the
present study.

The current study is concerned with the topology of the Hermitian
line bundles $E_{j}\rightarrow Q$. In particular, we would like to
know when the bundle is trivial, i.e., a global product bundle over
$Q$, and when it is not. If nontrivial, it is desirable to calculate
the Chern classes of the bundle to measure how twisted it is. For
the Hermitian line bundles of eigenmodes of plasma waves, we will
show in Secs.\,\ref{sec:TLCWPrediction} and \ref{sec:AnalyticalTDC}
that the topological index of the $E_{1}$ bundle over a properly
chosen non-contractible, compact manifold in phase space $Q$, calculated
from its first Chern class $C_{1}(E_{1})$, determines the number
of TLCWs at the transition region.

For Hermitian bundles, the associated principal bundles are $U(n)$
bundles and each Chern class $C_{j}$ is a de Rham cohomology class
of the base manifold constructed from a curvature 2-form of the bundles.
According to the Chern-Weil theorem, different connections for the
bundles yield the same de Rham cohomology classes on the base manifold.
In the present study, it is only necessary to calculate the first
Chern class, and the following result is useful, 
\begin{equation}
C_{1}(E_{J})=\sum_{j\in J}C_{1}(E_{j})\,.\label{eq:C1}
\end{equation}
The right-hand side of Eq.\,(\ref{eq:C1}) is relatively easy to
calculate because each $E_{j}$ is a Hermitian line bundle, whose
first Chern class is given by 
\begin{align}
C_{1} & =\dfrac{\mathrm{i}}{2\pi}\theta,\\
\theta & =\mathrm{d}\chi,
\end{align}
where $\theta$ is a curvature 2-form and $\chi$ is a connection.
As mentioned above, different connections will generate the same $C_{1}$
class. Nevertheless, the Hermitian line bundle is endowed with natural
connection 
\begin{equation}
\chi=\left\langle w,\mathrm{d}w\right\rangle ,
\end{equation}
which is a $u(1)$-valued local 1-form in each trivialization patch.
This natural connection for the Hermitian line bundle is known as
the Berry connection in condensed matter physics or the Simon connection
\footnote{It was Barry Simon \citep{Simon1983,Castelvecchi2020} who first pointed
out that Michael Berry's phase is an anholonomy of the natural connection
on a Hermitian line bundle. For this reason, Frankel \citep{Frankel2011}
taunted the temptation to call it the Berry-Barry connection. However,
as Barry Simon pointed out, it is had been known to geometers such
as Bott and Chern \citep{Bott1965}. Given the current culture of
inclusion, it is probably more appropriate to call it the Bott-Chern-Berry-Simon
connection.}.

\subsection{Trivial topology of plasma waves in momentum space \label{subsec:TrivialTop}}

In terms of topological properties, there is a major difference between
condensed matters and classical continuous media such as plasmas and
fluids. The momentum space, or wavenumber space, of typical condensed
matters is the Brillouin zone, which is non-contractible due to the
periodicity of the lattices. On the contrary, the wavenumber space
in plasmas and fluids is contractible, and it is a well-known fact
that vector bundles over a contractible manifold are trivial. Here,
a topological manifold $M$ is called contractible if it is of the
same homotopy type of a point, i.e., there exist a point $x_{0}$
and continuous maps $f:M$$\rightarrow\{x_{0}\}$ and $g:\{x_{0}\}\rightarrow M$
such that $f\circ g$ is homotopic to identity in $\{x_{0}\}$ and
$g\circ f$ is homotopic to identity in $M.$ Because of its importance
to the continuous media in classical physics, we formalize this result
as a theorem. 
\begin{thm}
\label{thm:contractible}Let $Q$ be a subset of the parameter space
$M$ for a bulk Hamiltonian symbol $H$. If the $j$-th eigenmode
is non-degenerate on $Q$, and $Q$ is a contractible manifold, then
the Hermitian line bundle $E_{j}\rightarrow Q$ is trivial. In particular,
the n-th Chern class $C_{n}($$E_{j}\rightarrow Q)=0$ for $n\ge1.$ 
\end{thm}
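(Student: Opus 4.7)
The plan is to invoke the standard homotopy invariance of vector bundles together with contractibility of $Q$. The non-degeneracy hypothesis first needs a short remark: it guarantees that the eigenmode $\psi_j(q)$ varies continuously (indeed smoothly) with $q$ in the sense that the spectral projector onto the $j$-th eigenspace is a continuous rank-one projector-valued function on $Q$, so $E_j\to Q$ is a well-defined continuous (Hermitian) complex line bundle, and the definitions of Sec.~\ref{sec:Topology} apply.

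Next I would recall the homotopy classification of complex vector bundles: if $f_0,f_1:X\to Y$ are continuous maps that are homotopic and $F\to Y$ is a complex vector bundle over a paracompact base, then the pullback bundles $f_0^\ast F$ and $f_1^\ast F$ are isomorphic. Since $Q$ is contractible by hypothesis, there exists a point $q_0\in Q$ and maps $f:Q\to\{q_0\}$, $g:\{q_0\}\to Q$ with $g\circ f\simeq \mathrm{id}_Q$. Applying homotopy invariance to $F=E_j$, one obtains
\begin{equation}
E_j \;\cong\; \mathrm{id}_Q^\ast E_j \;\cong\; (g\circ f)^\ast E_j \;=\; f^\ast\bigl(g^\ast E_j\bigr).
\end{equation}
But $g^\ast E_j$ is a line bundle over a single point, hence trivial, and pulling back a trivial bundle along any map produces a trivial bundle. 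Therefore $E_j\to Q$ is trivial, i.e.\ globally isomorphic to the product line bundle $Q\times\mathbb{C}$.

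For the Chern-class statement, once $E_j\cong Q\times\mathbb{C}$ is established, one can choose the constant global section giving a global trivialization. In this trivialization the natural Hermitian connection can be taken to be the zero $u(1)$-valued $1$-form, so the curvature $\theta=\mathrm{d}\chi$ vanishes identically, and hence every Chern form vanishes. By the Chern--Weil correspondence the de Rham cohomology classes $C_n(E_j\to Q)$ are zero for every $n\geq 1$, as claimed. (Equivalently, naturality of Chern classes gives $C_n(E_j)=f^\ast C_n(g^\ast E_j)=0$ since $g^\ast E_j$ lives over a point.)

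There is really no technical obstacle beyond bookkeeping; the main subtlety to flag is the continuity of the eigenprojector, which is why the non-degeneracy hypothesis is essential. If the $j$-th eigenvalue collided with a neighboring one somewhere in $Q$, the line bundle $E_j$ would fail to be defined as a continuous rank-one object, and the conclusion would break down at its very premise rather than at the topological step.
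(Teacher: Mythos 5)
Your argument is correct and is essentially the paper's own proof: the paper disposes of this statement by declaring it a direct corollary of the homotopy induced isomorphism (Theorem \ref{thm:homotopyIso}), which is exactly the homotopy-invariance-plus-contractibility argument you spell out, with the Chern-class vanishing following from naturality or from a flat connection on the trivial bundle. Your additional remark that non-degeneracy is what makes the eigenprojector continuous, so that $E_{j}\rightarrow Q$ is a well-defined line bundle in the first place, is a useful clarification but not a departure from the paper's route.
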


Note that Theorem \ref{thm:contractible} holds for any bulk Hamiltonian
symbol. For the $H(\boldsymbol{r},\boldsymbol{k})$ defined in Eq.\,(\ref{eq:H})
and the $H(x,k_{x},k_{y},k_{z})$ defined in Eq.\,(\ref{eq:Hx})
for plasma waves, a more specific result is available as a direct
corollary of Theorem \ref{thm:contractible}. 
\begin{thm}
\label{thm:ContractiblePlasma}For the bulk Hamiltonian symbol $H(\boldsymbol{r},\boldsymbol{k})$
defined in Eq.\,(\ref{eq:H}) for plasma waves, when $k_{z}\neq k^{\pm},$
the Hermitian line bundle of all eigenmodes over the perpendicular
wavenumber plane $Q_{k_{\perp}}=\{(k_{x},k_{y})\mid k_{x}\in\mathbb{R},\thinspace k_{y}\in\mathbb{R}\}=\mathbb{R}^{2}$
are trivial. In particular, $C_{n}(E_{j}\rightarrow Q_{k_{\perp}})=0$
for $n\ge1$ and $-4\le j\le4.$ 
\end{thm}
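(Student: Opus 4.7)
My plan is to deduce Theorem \ref{thm:ContractiblePlasma} as a direct corollary of Theorem \ref{thm:contractible}. First I would observe that $Q_{k_{\perp}} = \mathbb{R}^{2}$ is contractible, via the straight-line homotopy $(\boldsymbol{k}_{\perp},t) \mapsto (1-t)\boldsymbol{k}_{\perp}$, which deforms the identity on $\mathbb{R}^{2}$ into the constant map to the origin. Once the non-degeneracy hypothesis of Theorem \ref{thm:contractible} has been verified for each eigenfrequency on $Q_{k_{\perp}}$, the triviality of every Hermitian line bundle $E_{j}\to Q_{k_{\perp}}$, and in particular the vanishing $C_{n}(E_{j}\to Q_{k_{\perp}}) = 0$ for every $n\ge 1$ and every $-4\le j\le 4$, are immediate.

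\textbf{Non-degeneracy step.} The substantive content therefore reduces to showing that, for fixed $k_{z}\ne k^{\pm}$, the nine eigenfrequencies $\omega_{n}(k_{x},k_{y},k_{z})$ of the Hermitian matrix $H(\boldsymbol{r},\boldsymbol{k})$ in Eq.~(\ref{eq:H}) remain pairwise distinct at every $(k_{x},k_{y}) \in \mathbb{R}^{2}$. As noted in the discussion of Fig.~\ref{fig:dispersion_relation}, $H$ is invariant under rotations of $\boldsymbol{k}$ in the $k_{x}$--$k_{y}$ plane, so the eigenvalues depend on $(k_{x},k_{y})$ only through $k_{\perp}=\sqrt{k_{x}^{2}+k_{y}^{2}}$. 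It is therefore enough to test simplicity along the single ray $k_{y}=0$, $k_{x}\ge 0$, on which the secular equation of $H$ factors into the familiar cold-plasma dispersion branches (the R, L, O, X modes and the static $\omega=0$ branch), paired by the symmetry $\omega_{-n}=-\omega_{n}$. The paragraph preceding Fig.~\ref{fig:DiracCone} has already identified the only possible resonances of $H$ as the two Weyl points at $\boldsymbol{k}_{\perp}=0$, $k_{z}=k^{\pm}$; imposing $k_{z}\ne k^{\pm}$ therefore rules out every accidental coincidence, so that each $\omega_{n}$ is simple throughout $Q_{k_{\perp}}$.

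\textbf{Main obstacle and conclusion.} The principal obstacle is precisely the algebraic check that no two of the nine branches $\omega_{n}(k_{\perp},k_{z})$ coincide at any $k_{\perp}\ge 0$ when $k_{z}\ne k^{\pm}$. This is a finite but tedious computation on the characteristic polynomial of the $9\times 9$ Hermitian symbol, rendered tractable by the block structure of Eq.~(\ref{eq:H}) and by the residual $SO(2)$ symmetry in the $k_{x}$--$k_{y}$ plane; the computation is the same one already invoked to locate the Weyl points. With non-degeneracy established, applying Theorem \ref{thm:contractible} to each $E_{j}\to Q_{k_{\perp}}$ separately yields triviality of every line bundle; the corresponding statement for any composite bundle $E_{J}$ then follows from the additivity formula Eq.~(\ref{eq:C1}). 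This completes the proof.
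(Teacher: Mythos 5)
Your proposal is correct and follows essentially the same route as the paper, which presents Theorem \ref{thm:ContractiblePlasma} as a direct corollary of Theorem \ref{thm:contractible}: contractibility of $Q_{k_{\perp}}=\mathbb{R}^{2}$ plus the fact (stated in Sec.\,\ref{sec:Problem-statement}) that the only resonances occur at $\boldsymbol{k}_{\perp}=0$, $k_{z}=k^{\pm}$, so non-degeneracy holds throughout the plane when $k_{z}\neq k^{\pm}$. Your added detail on exploiting the $SO(2)$ symmetry to reduce the non-degeneracy check to a single ray is a reasonable elaboration of the "straightforward analysis" the paper invokes but does not spell out.
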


The fact that $C_{n}(E_{j}\rightarrow Q)$ vanishes when $Q$ is contractible
is expected after all because $C_{n}$ is the de Rham cohomology class
of the base manifold. But Theorem \ref{thm:ContractiblePlasma} is
important. It tells us that the plasma wave topology over the $k_{x}$-$k_{y}$
plane is trivial if $k_{z}\neq k^{\pm}.$ Nontrivial topology of plasma
wave bundles occurs only over non-contractible parameter manifolds,
for example, over an $S^{2}$ surface in the phase space of $(x,k_{x},k_{y})$,
as we will show in Sec.\,\ref{subsec:NonTrivialTop}.

Before leaving this subsection, we would like to point out that in
recent studies of wave topology in classical continuous media, much
effort has been made to calculate ``topological indices'' or ``Chern
numbers'' of the eigenmode bundles over the contractible $k_{x}$-$k_{y}$
plane. This type of effort is characterized by the attempt to evaluate
the integration of the Berry curvature or various modified versions
thereof over the $k_{x}$-$k_{y}$ plane. The difficulties involved
were often attributed to the fact that the $k_{x}$-$k_{y}$ plane
is not compact. As we see from Theorems \ref{thm:contractible} and
\ref{thm:ContractiblePlasma}, when the wave bundle is well-defined
over the entire $k_{x}$-$k_{y}$ plane, its topology is trivial.
In these cases, the non-compactness of the $k_{x}$-$k_{y}$ plane
is irrelevant, so is whether an integer or non-integer index can be
designed.

\subsection{Nontrivial plasma wave topology in phase space \label{subsec:NonTrivialTop}}

In the present context, the ultimate utility of the topological property
of the eigenmode bundles of the bulk Hamiltonian symbol $H$ is to
predict the existence of the topological edge modes of the global
Hamiltonian PDO $\hat{H}.$ For this purpose, the proper plasma wave
eigenmode bundles are over the 2D sphere in the parameter space of
$(x,k_{x},k_{y})$ for the $H(x,k_{x},k_{y},k_{z})$ defined in Eq.\,(\ref{eq:Hx}),
\begin{equation}
S_{1}^{2}=\left\{ (x,k_{x},k_{y})\mid x^{2}+k_{x}^{2}+k_{y}^{2}=1\right\} =\partial B_{1}^{3},
\end{equation}
where $B_{1}^{3}$ is the 3D ball with radius $r=1$.

One indicator of nontrivial topology, or twist, of a wave eigenmode
bundle over $S_{1}^{2}$ is the number of zeros a nontrivial section
must have, akin to the situation of hairy ball theorem for the tangent
bundle of $S^{2}$. To include the possibilities of repeated zeros,
we follow Frankel \citep{Frankel2011} to define the index of an isolated
zero point $z$ of a section $u$ of a Hermitian line bundle as follows. 
\begin{defn}
\label{def:ZeroIndex}Let $z$ be an isolated zero of a section $u$
that has a finite number of zeros. Select a normalized local frame
$e$ for the Hermitian line bundle in the neighborhood of $z$. The
normalized section near $z$, but not at $z,$ can be expressed as
$u/\mid u\mid=e\text{exp}(\mathrm{i}\alpha).$ The index at $z$ is
defined to be 
\begin{equation}
j_{u}(z)\equiv\frac{1}{2\pi}\int_{\partial D}\mathrm{d}\alpha,
\end{equation}
where $D$ is a small disk containing $z$ on $S_{1}^{2}$ with orientation
pointing away from $S_{1}^{2},$ and the orientation of $\partial D$
is induced from that of $D$. The index of the section $u$ is the
sum of indices at all zeros $z_{l}$ of $u,$ 
\begin{equation}
\text{Ind}(u)\equiv\sum_{l}j_{u}(z_{l}).
\end{equation}
\end{defn}

Note that for each isolated zero $z$, the local frame $e$ selected
in the neighborhood of $z$ is not vanishing at $z$. The index $j_{u}(z)$
intuitively measures how many turns the phase of $u$ increases relative
to $e$ at $z$ over one turn on $\partial D$. In general, $e$ is
only a local frame instead of a global frame, otherwise the bundle
is trivial.

The following theorem of Chern relates the index of a nontrivial section
to the first Chern class over $S^{2}$ \citep{Frankel2011}. 
\begin{thm}
\label{thm:Chern} {[}Chern{]} Let $E$ be a Hermitian line bundle
over a closed orientable 2D surface $S^{2}$. Let $u:$$S^{2}\rightarrow E$
be a section of $E$ with a finite number of zeros. Then the integral
of the 1st Chern class $C_{1}$ over $S^{2}$ is an integer that is
equal to the index of the section $u,$ i.e., 
\[
n_{c}\equiv\int_{S^{2}}C_{1}(E\rightarrow S^{2})=\mathrm{Ind}(u).
\]
\end{thm}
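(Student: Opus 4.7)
The plan is to exploit the fact that a Hermitian line bundle is trivial over any region where a non-vanishing section exists, and then to reduce the computation of $\int_{S^2} C_1$ to boundary integrals around the isolated zeros by Stokes' theorem. The given section $u$ furnishes a normalized global frame $\tilde u = u/|u|$ on the punctured surface $S^2 \setminus \{z_1,\dots,z_N\}$, while Definition \ref{def:ZeroIndex} supplies non-vanishing local frames $e_l$ on small disks $D_l$ around each zero $z_l$. The Chern--Weil theorem, together with Eq.\,(\ref{eq:C1}) and the natural (Bott--Chern--Berry--Simon) connection, means that I can compute the cohomology class with whichever connection I like.

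First I would choose disjoint small disks $D_l \subset S^2$ centered at the zeros $z_l$, oriented as subsets of $S^2$, and set $R \equiv S^2 \setminus \bigcup_l \mathrm{int}(D_l)$. On $R$ the frame $\tilde u$ is well-defined and gives a connection 1-form $\chi_{\tilde u} = \langle \tilde u, \mathrm{d}\tilde u\rangle$ with $\theta|_R = \mathrm{d}\chi_{\tilde u}$. On each $D_l$ the local frame $e_l$ gives $\chi_l = \langle e_l, \mathrm{d}e_l\rangle$ with $\theta|_{D_l} = \mathrm{d}\chi_l$. Applying Stokes' theorem on each piece and keeping track of the induced orientations (where $\partial R$ and $\partial D_l$ inherit opposite orientations), I obtain
\begin{equation}
\int_{S^2} \theta = \int_R \mathrm{d}\chi_{\tilde u} + \sum_l \int_{D_l} \mathrm{d}\chi_l = \sum_l \int_{\partial D_l} \left(\chi_l - \chi_{\tilde u}\right).
\end{equation}

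Next I would compute the integrand on each boundary circle using the transition relation. Since $e_l$ is a normalized local frame and $\tilde u/|\tilde u| = \tilde u$ is also normalized, the representation $\tilde u = e_l\,\mathrm{e}^{\mathrm{i}\alpha_l}$ given in Definition \ref{def:ZeroIndex} yields, by a direct differentiation,
\begin{equation}
\chi_{\tilde u} = \chi_l + \mathrm{i}\,\mathrm{d}\alpha_l,
\end{equation}
so that $\chi_l - \chi_{\tilde u} = -\mathrm{i}\,\mathrm{d}\alpha_l$. Substituting and using $C_1 = (\mathrm{i}/2\pi)\theta$, I arrive at
\begin{equation}
\int_{S^2} C_1 = \frac{\mathrm{i}}{2\pi}\sum_l \int_{\partial D_l}\!\!(-\mathrm{i})\,\mathrm{d}\alpha_l = \sum_l \frac{1}{2\pi}\int_{\partial D_l}\mathrm{d}\alpha_l = \sum_l j_u(z_l) = \mathrm{Ind}(u),
\end{equation}
which is the claim; integrality follows automatically because each $j_u(z_l) \in \mathbb{Z}$ (the phase $\alpha_l$ is defined only modulo $2\pi$, so $\oint \mathrm{d}\alpha_l \in 2\pi\mathbb{Z}$).

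The only genuinely delicate step, which I expect to be the main source of confusion rather than real mathematical obstruction, is the bookkeeping of orientations: one must verify that $\partial R$ carries the opposite orientation to $\sum_l \partial D_l$ so that the minus sign in the Stokes calculation is correct, and that this matches the orientation convention for $\partial D$ implicit in Definition \ref{def:ZeroIndex}. A secondary technical point is that the computation assumed a non-vanishing section exists on $R$; existence of a section $u$ with only finitely many zeros is explicitly part of the hypothesis, so no additional argument is needed, but I would remark that the existence of such a section on a closed orientable 2-surface is guaranteed for any smooth complex line bundle by transversality.
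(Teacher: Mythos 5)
The paper does not actually prove Theorem \ref{thm:Chern}; it states it as a classical result and cites Frankel for the proof. Your argument is precisely that standard proof --- excise small disks around the zeros, trivialize the bundle on the complement $R$ using the normalized section, and convert $\int_{S^{2}}\theta$ into boundary integrals of the difference of connection $1$-forms via Stokes' theorem, with integrality following from single-valuedness of $\mathrm{e}^{\mathrm{i}\alpha_{l}}$ --- so there is no methodological divergence to report: the decomposition, the use of the transition relation on $\partial D_{l}$, and the orientation cancellation $\partial R=-\sum_{l}\partial D_{l}$ are all correct and are exactly what the cited source does.

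One concrete point should be pinned down rather than left to ``bookkeeping.'' With the Hermitian form as defined in this paper, $\left\langle u,w\right\rangle =u\cdot\bar{w}$ (conjugate-linear in the \emph{second} slot), differentiating $\tilde{u}=e_{l}\,\mathrm{e}^{\mathrm{i}\alpha_{l}}$ gives $\chi_{\tilde{u}}=\left\langle \tilde{u},\mathrm{d}\tilde{u}\right\rangle =\chi_{l}-\mathrm{i}\,\mathrm{d}\alpha_{l}$, not $\chi_{l}+\mathrm{i}\,\mathrm{d}\alpha_{l}$; your stated transition relation holds for the opposite convention $\left\langle u,w\right\rangle =\bar{u}\cdot w$. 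Carried through your Stokes identity, the paper's literal conventions yield $\int_{S^{2}}C_{1}=-\mathrm{Ind}(u)$, so the equality as stated requires either the other inner-product convention, or $C_{1}=-\tfrac{\mathrm{i}}{2\pi}\theta$, or an orientation choice in Definition \ref{def:ZeroIndex} that absorbs the sign. This is the delicate step you yourself flagged, but it is a definite sign in the frame-change formula, not merely the relative orientation of $\partial R$ and $\partial D_{l}$, and it deserves an explicit resolution.
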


Here, $n_{c}$ is known as the first Chern number. Since the present
study only involves the first Chern number, it is denoted by $n_{c}$
instead of $n_{c_{1}}.$ The first Chern number of the $j$-th eigenmode
bundle over $S_{1}^{2}$ is denoted by $n_{cj}$. In Sec.\,\ref{sec:TLCWPrediction},
the first Chern number $n_{c1}$ of the plasma wave eigenmode bundle
$E_{1}$ over $S_{1}^{2}$ will be linked to the spectral flow index
of $\hat{H}(x,-\mathrm{i}\eta\partial_{x},k_{y},k_{z})$ using Faure's
index theorem \citep{Faure2019}. To facilitate the calculation of
$n_{cj}$ over $S_{1}^{2}$, we establish the following general facts
about eigenmode bundles in continuous media, including plasma wave
eigenmode bundles. 
\begin{defn}
\label{def:Iso2} Let $\pi_{1}:E_{1}\rightarrow P$ and $\pi_{2}:E_{2}\rightarrow Q$
be two vector bundles. A diffeomorphism $\phi:E_{1}\rightarrow E_{2}$
is called an isomorphism if for every $p\in P$, $\exists$ $q\in Q$
such that $\phi\circ\pi_{1}^{-1}(p)\subset\pi_{2}^{-1}(q)$ and $\phi:\pi_{1}^{-1}(p)\rightarrow\pi_{2}^{-1}(q)$
is a vector space isomorphism. If an isomorphism exists, $E_{1}$
and $E_{2}$ are isomorphic, denoted as $E_{1}\simeq E_{2}.$ 
\end{defn}

Note that the base manifolds $P$ and $Q$ in the above definition
can be identical or different. 
\begin{defn}
Let $f:P\rightarrow Q$ is a smooth map between differential manifolds
$P$ and $Q,$ and $\pi_{2}:E\rightarrow Q$ a vector bundle over
$Q$. The pullback bundle $\pi_{1}:f^{*}E\rightarrow P$ is defined
to be 
\begin{equation}
f^{*}E=\left\{ (p,e)\in P\times E\mid p\in P,\,e\in E,\,f(p)=\pi_{2}(e)\right\} .
\end{equation}
\end{defn}

The standard manifold and vector bundle structure of $f^{*}E$ can
be formally established. For example, see Ref. \citep{Tu2017-177}.
Note that the pullback bundle is defined as a pullback set and this
mechanism does not define a map from $E$ to $f^{*}E$ because $f$
is not in general invertible. On the other hand, when $f$ is invertible,
a pullback map from $E$ to $f^{*}E$ can be defined by a similar
mechanism as follows. 
\begin{defn}
Let $f:P\rightarrow Q$ is a diffeomorphism between differential manifolds
$P$ and $Q,$ and $\pi_{2}:E\rightarrow Q$ a vector bundle over
$Q$. The pullback map $f^{\dagger}$ is defined to be 
\begin{align}
f^{\dagger}: & E\rightarrow f^{*}E,\\
 & e\mapsto(f^{-1}\circ\pi_{2}(e),e).\nonumber 
\end{align}
When $f$ is a diffeomorphism, $f^{\dagger}(E)=f^{*}E.$ 
\end{defn}

We will use the following theorem known as homotopy induced isomorphism
\citep{Bott1982}. 
\begin{thm}
\label{thm:homotopyIso}{[}Homotopy induced isomorphism{]} Let $f_{0}$
and $f_{1}$ are two homotopic maps between manifolds $P$ and $Q$.
For a vector bundle $E\rightarrow Q$, the pullback bundles $f_{0}^{*}E$
and $f_{1}^{*}E$ over $P$ are isomorphic. 
\end{thm}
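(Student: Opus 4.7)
The plan is to reduce the statement to the assertion that any vector bundle over the cylinder $P\times I$, where $I=[0,1]$, is isomorphic to the pullback of its restriction to $P\times\{0\}$ via the projection $\pi:P\times I\to P$. Given a homotopy $F:P\times I\to Q$ with $F(\cdot,0)=f_0$ and $F(\cdot,1)=f_1$, I would form the pullback bundle $F^{*}E\to P\times I$. By the definition of pullback, its restriction to $P\times\{t\}$ is canonically identified with $f_{t}^{*}E$ whenever $F(\cdot,t)=f_{t}$. Thus, once the cylinder lemma is established, we obtain $f_{0}^{*}E\simeq (F^{*}E)|_{P\times\{0\}}\simeq (F^{*}E)|_{P\times\{1\}}\simeq f_{1}^{*}E$, which is the desired isomorphism.

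To prove the cylinder lemma, I would work with a general vector bundle $\pi:V\to P\times I$ of rank $r$. First, for each point $p\in P$, use the compactness of $I$ and the local triviality of $V$ to cover $\{p\}\times I$ by finitely many open sets on which $V$ is trivial; then shrink and intersect their $P$-projections to obtain an open neighborhood $W_{p}\subset P$ together with a finite partition $0=t_{0}<t_{1}<\cdots<t_{N}=1$ such that $V$ is trivial over each $W_{p}\times[t_{i-1},t_{i}]$. On each such slab, the transition between time slices is just a smooth map into $GL(r,\mathbb{C})$, hence yields a bundle isomorphism $V|_{W_{p}\times\{t_{i-1}\}}\to V|_{W_{p}\times\{t_{i}\}}$; composing these gives a local isomorphism $\Phi_{p}:V|_{W_{p}\times\{0\}}\to V|_{W_{p}\times\{1\}}$. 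The local pieces are patched by choosing a locally finite refinement $\{W_{\alpha}\}$ of $\{W_{p}\}$, which exists because manifolds are paracompact, and using a subordinate partition of unity $\{\rho_{\alpha}\}$ to interpolate the corresponding local horizontal lifts.

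A slightly slicker implementation is to equip $V$ with a Hermitian metric and a connection $\nabla$, both of which exist by a standard partition-of-unity argument, and then define $\Phi:V|_{P\times\{0\}}\to V|_{P\times\{1\}}$ as parallel transport along the curves $t\mapsto(p,t)$. This yields a smooth fiberwise linear isomorphism, and unitarity with respect to the chosen Hermitian form can be arranged if desired. Combined with the naturality of pullback, one concludes $V\simeq\pi^{*}(V|_{P\times\{0\}})$.

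The main obstacle is the technical bookkeeping in the patching step: one needs to ensure that the locally defined isomorphisms, which depend on arbitrary choices of trivialization, are glued into a globally well-defined bundle map. The parallel-transport approach avoids this by giving a canonical construction once a connection is fixed, and the Chern--Weil philosophy used elsewhere in the paper already assures us that such connections exist. A secondary subtlety is that the homotopy $F$ is assumed only to be continuous, whereas the pullback bundle and parallel-transport construction require at least $C^{1}$ regularity; this is resolved by smoothing $F$, which does not change its homotopy class, so without loss of generality we may assume $F$ is smooth from the outset.
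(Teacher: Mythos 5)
Your argument is correct, but note that the paper does not actually prove Theorem \ref{thm:homotopyIso}: it is quoted as a standard result and attributed to the literature (Bott--Tu), so there is no in-paper proof to compare against. What you have written is essentially the standard textbook proof of homotopy invariance of pullback bundles: reduce to the cylinder lemma that any bundle $V\to P\times I$ is isomorphic to $\pi^{*}\bigl(V|_{P\times\{0\}}\bigr)$, then establish that lemma either by slab-by-slab trivialization or by parallel transport along the $I$-fibers. Two remarks on your execution. First, in the patching version, be careful with the phrase ``use a partition of unity to interpolate'': a convex combination of the local isomorphisms $\Phi_{p}$ need not be invertible, so you cannot glue the $\Phi_{p}$ directly by averaging. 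What \emph{can} be glued by a partition of unity is the local horizontal distributions (connections form an affine space), after which a single global parallel transport produces the isomorphism -- so your first construction only becomes rigorous once it collapses into your second one; alternatively, the purely topological gluing (as in Hatcher) composes a locally finite sequence of straightening isomorphisms rather than averaging them. Second, your smoothing remark is apt and worth keeping, since the theorem as stated assumes only a continuous homotopy while the pullback-bundle and connection machinery used elsewhere in the paper lives in the smooth category; by Whitney approximation one may replace $F$ by a smooth homotopy with the same endpoints up to smooth homotopy, which does not affect the isomorphism class of the pullbacks. With the parallel-transport route taken as primary, your proof is complete and supplies a self-contained justification for a statement the paper leaves to a citation.
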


Theorem \ref{thm:contractible} is a direct corollary of Theorem \ref{thm:homotopyIso}. 
\begin{thm}
\label{thm:pullbackIso} Let $f:P\rightarrow Q$ is a diffeomorphism
between differential manifolds $P$ and $Q$, and $\pi_{2}:E\rightarrow Q$
a vector bundle over $Q$. The pullback map $f^{\dagger}$ is an isomorphism
between $E$ and $f^{*}E$. 
\end{thm}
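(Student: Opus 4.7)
The plan is to verify the two conditions of Definition \ref{def:Iso2} directly: that $f^\dagger$ is a diffeomorphism between total spaces, and that it is fiberwise linear while covering a well-defined correspondence of base points. Because $f$ is already a diffeomorphism, all needed ingredients are available and the proof is largely a bookkeeping exercise; the main notational care is to remember that $E$ sits over $Q$ whereas $f^{*}E$ sits over $P$, so the roles of base spaces in Definition \ref{def:Iso2} are swapped relative to the pullback construction.

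First I would check that $f^\dagger$ actually lands in $f^{*}E$: for any $e\in E$, the pair $(f^{-1}(\pi_{2}(e)),e)\in P\times E$ satisfies $f(f^{-1}(\pi_{2}(e)))=\pi_{2}(e)$, which is precisely the defining relation of $f^{*}E$. Next, I would construct an explicit two-sided inverse $g:f^{*}E\to E$ by $g(p,e)=e$, the projection onto the second factor. Using the constraint $f(p)=\pi_{2}(e)$ satisfied by every element of $f^{*}E$, one verifies
\begin{align}
g\circ f^\dagger(e) &= g\bigl(f^{-1}(\pi_{2}(e)),e\bigr)=e,\\
f^\dagger\circ g(p,e) &= \bigl(f^{-1}(\pi_{2}(e)),e\bigr)=\bigl(f^{-1}(f(p)),e\bigr)=(p,e).
\end{align}
Both $f^\dagger$ and $g$ are smooth: $g$ is the restriction of the smooth projection $P\times E\to E$, while $f^\dagger$ is a composition of the smooth maps $\pi_{2}$ and $f^{-1}$ with the smooth inclusion $e\mapsto (f^{-1}(\pi_{2}(e)),e)$. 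Thus $f^\dagger$ is a diffeomorphism of total spaces.

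Finally I would verify the fiberwise condition. Let $\pi_{1}:f^{*}E\to P$ denote the pullback-bundle projection $(p,e)\mapsto p$. Given any $q\in Q$, set $p=f^{-1}(q)$. Then for every $e\in \pi_{2}^{-1}(q)$, $f^\dagger(e)=(p,e)\in\pi_{1}^{-1}(p)$, so $f^\dagger(\pi_{2}^{-1}(q))\subset\pi_{1}^{-1}(p)$. By construction of the pullback bundle, the fiber $\pi_{1}^{-1}(p)$ equals $\{p\}\times \pi_{2}^{-1}(q)$ with vector-space structure inherited from $\pi_{2}^{-1}(q)$; hence $e\mapsto (p,e)$ is tautologically a linear bijection, and the required fiberwise vector-space isomorphism is established.

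The hard part, if any, is notational rather than conceptual. The hypothesis that $f$ is a diffeomorphism rather than merely a smooth map enters essentially twice: once to make $f^{-1}$ available in the very definition of $f^\dagger$, and once to ensure that $g$ is a globally defined two-sided inverse on all of $f^{*}E$. This is consistent with the familiar fact that for a general smooth $f$ one only obtains a pullback bundle $f^{*}E\to P$ together with a bundle morphism $f^{*}E\to E$ covering $f$, not a bundle isomorphism $E\simeq f^{*}E$.
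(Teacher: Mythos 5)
Your proof is correct and follows essentially the same route as the paper's: verify the two conditions of Definition \ref{def:Iso2} by exhibiting $f^{\dagger}$ as a diffeomorphism of total spaces and then checking the fiberwise linear bijection onto $\pi_{1}^{-1}(p)$ with $p=f^{-1}(q)$. You are in fact slightly more explicit than the paper (writing out the two-sided inverse $g(p,e)=e$ and checking that $f^{\dagger}$ lands in $f^{*}E$, where the paper appeals to a local trivialization and says the inverse exists ``by construction''), but the argument is the same.
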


\begin{proof}
According to Definition \ref{def:Iso2}, it suffices to prove that
(i) $f^{\dagger}:E\rightarrow f^{*}E$ is a diffeomorphism and (ii)
$\forall q\in Q,$ $\exists p\in P$ such that $f^{\dagger}\circ\pi_{2}^{-1}(q)\subset\pi_{1}^{-1}(p)$
and $f^{\dagger}:\pi_{2}^{-1}(q)\mapsto\pi_{1}^{-1}(p)$ is an isomorphism
of vector space.

Since $f^{\dagger}(e)=(f^{-1}\circ\pi_{2}(e),e)$ and both $f^{-1}$
and $\pi_{2}$ are smooth, $f^{\dagger}$ is smooth. By construction,
$f^{\dagger}$ is smoothly invertible. Thus, $f^{\dagger}$ is a diffeomorphism.
To prove (ii), we utilize a local trivialization. For $\forall q\in Q$,
let $U$ be an open set containing $q$ in the open cover of $Q$
for the local trivialization of $E\rightarrow Q$. Locally, $E$ is
a production $U\times V.$ In particular, $\pi_{2}^{-1}(q)=\{q\}\times V$
and 
\begin{equation}
f^{\dagger}:(q,v)\mapsto\left(p=f^{-1}\circ\pi_{2}(q,v)=f^{-1}(q),(q,v)\right).
\end{equation}
For this fixed $q,$ 
\begin{align}
f^{\dagger}\circ\pi_{2}^{-1}(q) & =f^{\dagger}\left(\left\{ (q,v)\mid v\in V\right\} \right)=\left\{ \left(p,(q,v)\right)\mid p=f^{-1}(q),v\in V\right\} \nonumber \\
 & =\left\{ \left(p,(q,v)\right)\mid f(p)=\pi_{2}(q,v),v\in V\right\} =\pi_{1}^{-1}(p).
\end{align}
Also, $f^{\dagger}:(q,v)\mapsto\left(p,(q,v)\right)$ for the fixed
$q$ and $p=f^{-1}(q)$ is an isomorphism. Thus, $f^{\dagger}$ is
an isomorphism and $E\simeq f^{*}E$. 
\end{proof}
The following is the main theorem of this paper, which will enable
us to analytical calculate the topological index for the TLCW. In
a parameter space that is $\mathbb{R}^{3}$, denote by $S_{r}^{2}=\left\{ q=(q_{1,}q_{2,}q_{3})\mid q_{1}^{2}+q_{2}^{2}+q_{3}^{2}=r^{2}\right\} $
the sphere of radius $r$, and by $Sh_{(a,b)}\equiv\left\{ q=(q_{1,}q_{2,}q_{3})\mid a^{2}\le q_{1}^{2}+q_{2}^{2}+q_{3}^{2}\le b^{2}\right\} $
the 3D shell with inner radius $a$ and outer radius $b$. 
\begin{thm}
\label{thm:BoundaryIso}{[}Boundary isomorphism{]} Let $E\rightarrow Sh_{(a,b)}$
be a Hermitian line bundle defined over a shell $Sh_{(a,b)}$ in a
parameter space that is $\mathbb{R}^{3}$.

(i) The bundles obtained by restricting $E$ over $S_{a}^{2}$ and
$S_{b}^{2}$ are isomorphic, i.e., $E\rightarrow S_{a}^{2}\simeq E\rightarrow S_{b}^{2}$.

(ii) Bundles $E\rightarrow S_{a}^{2}$ and $E\rightarrow S_{b}^{2}$
have the same first Chern number, i.e., $n_{c}\left(E\rightarrow S_{a}^{2}\right)=n_{c}\left(E\rightarrow S_{b}^{2}\right)$. 
\end{thm}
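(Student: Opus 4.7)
The plan is to reduce part (i) to a combination of the homotopy-induced isomorphism (Theorem \ref{thm:homotopyIso}) and the pullback-by-diffeomorphism isomorphism (Theorem \ref{thm:pullbackIso}), and then derive part (ii) from part (i) by naturality of the first Chern class under bundle isomorphisms.

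First I would introduce the radial diffeomorphism $f:S_{a}^{2}\to S_{b}^{2}$ defined by $f(q)=(b/a)q$, together with the inclusions $i_{a}:S_{a}^{2}\hookrightarrow Sh_{(a,b)}$ and $i_{b}:S_{b}^{2}\hookrightarrow Sh_{(a,b)}$. Both $i_{a}$ and $i_{b}\circ f$ are smooth maps from $S_{a}^{2}$ into the shell, and they are linearly homotopic via $H(q,t)=\bigl[(1-t)a+tb\bigr]\,q/a$, whose image lies in $Sh_{(a,b)}$ because $|H(q,t)|=(1-t)a+tb\in[a,b]$. Theorem \ref{thm:homotopyIso} then yields an isomorphism of bundles over $S_{a}^{2}$,
\begin{equation}
E|_{S_{a}^{2}}=i_{a}^{*}E\simeq(i_{b}\circ f)^{*}E=f^{*}\bigl(E|_{S_{b}^{2}}\bigr),
\end{equation}
where the last equality uses functoriality of pullback. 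Next I would invoke Theorem \ref{thm:pullbackIso} with the diffeomorphism $f$ and the bundle $E|_{S_{b}^{2}}\to S_{b}^{2}$; it provides the pullback map $f^{\dagger}:E|_{S_{b}^{2}}\to f^{*}(E|_{S_{b}^{2}})$ as an isomorphism in the broad sense of Definition \ref{def:Iso2}, which permits different base manifolds. Composing the two isomorphisms establishes $E|_{S_{a}^{2}}\simeq E|_{S_{b}^{2}}$, which is part (i).

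For part (ii) I would appeal to naturality of the first Chern class. The composite isomorphism constructed above covers the diffeomorphism $f:S_{a}^{2}\to S_{b}^{2}$, which is orientation-preserving when both spheres carry the orientations induced from $\mathbb{R}^{3}$ by their outward normals. Hence $C_{1}(E|_{S_{a}^{2}})=f^{*}C_{1}(E|_{S_{b}^{2}})$ as de Rham classes on $S_{a}^{2}$, and the change-of-variable formula together with Theorem \ref{thm:Chern} gives
\begin{equation}
n_{c}\bigl(E\to S_{a}^{2}\bigr)=\int_{S_{a}^{2}}C_{1}(E|_{S_{a}^{2}})=\int_{S_{b}^{2}}C_{1}(E|_{S_{b}^{2}})=n_{c}\bigl(E\to S_{b}^{2}\bigr).
\end{equation}

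The main obstacle will be the careful bookkeeping around Definition \ref{def:Iso2}, since it allows isomorphisms between bundles over different base manifolds: one must verify that the base map covered by the composite isomorphism is indeed $f$ (rather than $f^{-1}$ or an orientation-reversing variant) so that the Chern-number equality holds without an unwanted sign. A secondary, but minor, subtlety is confirming that the linear homotopy $H$ stays strictly inside the shell $Sh_{(a,b)}$; this is immediate from convexity of the radial interval, but needs to be stated explicitly so that Theorem \ref{thm:homotopyIso} applies to maps genuinely landing in $Sh_{(a,b)}$.
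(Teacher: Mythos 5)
Your proof is correct and follows essentially the same route as the paper: part (i) rests on Theorem \ref{thm:homotopyIso} applied to a radial homotopy combined with Theorem \ref{thm:pullbackIso} applied to the radial diffeomorphism between the two spheres, and part (ii) uses naturality of $C_{1}$ plus the change-of-variables formula, exactly as in the paper's argument. The only cosmetic difference is that the paper homotopes a compressing self-map of the whole shell $Sh_{(a,b)}$ and then restricts to $S_{b}^{2}$, whereas you directly homotope the two inclusions of $S_{a}^{2}$ into the shell; the substance is identical.
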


\begin{proof}
To prove (i), construct the following continuous class of compressing
maps on $Sh_{(a,b)}$, 
\begin{align}
f_{\epsilon}: & Sh_{(a,b)}\rightarrow Sh_{(a,b)}\thinspace,\\
 & p\mapsto\epsilon p+(1-\epsilon)\frac{ap}{\left|p\right|},\thinspace0\le\epsilon\le1.\nonumber 
\end{align}
As $\epsilon$ decreases from $1$ to $0$ continuously, $f_{\epsilon}$
compresses the shell towards the inner sphere $S_{a}^{2}$. $f_{1}$
is the identify map, $f_{0}$ crashes the shell onto $S_{a}^{2},$
and $f_{1}$ and $f_{0}$ are homotopic. According to Theorem \ref{thm:homotopyIso},
\begin{equation}
f_{1}^{*}E\simeq f_{0}^{*}E\:.\label{eq:f1E}
\end{equation}
Restricting both sides of Eq.\,(\ref{eq:f1E}) to $S_{b}^{2}$ leads
to 
\begin{equation}
E\mid_{S_{b}^{2}}=\left(f_{1}^{*}E\right)\mid_{S_{b}^{2}}\simeq\left(f_{0}^{*}E\right)\mid_{S_{b}^{2}}=f_{0}^{*}\left(E\mid_{S_{a}^{2}}\right)\mid_{S_{b}^{2}}.
\end{equation}
Denote by $f_{0r}$ the restriction of $f_{0}$ on $S_{b}^{2},$ i.e.,
\begin{align}
f_{0r}: & S_{b}^{2}\rightarrow S_{a}^{2}\thinspace,\nonumber \\
 & p\mapsto f_{0}(p).
\end{align}
Obviously, $f_{0r}$ is a diffeomorphism, and according to Theorem
\ref{thm:pullbackIso}, 
\begin{equation}
f_{0}^{*}\left(E\mid_{S_{a}^{2}}\right)\mid_{S_{b}^{2}}=f_{0r}^{*}\left(E\mid_{S_{a}^{2}}\right)\simeq E\mid_{S_{a}^{2}}.
\end{equation}
Therefore, 
\begin{equation}
E\rightarrow S_{b}^{2}=E\mid_{S_{b}^{2}}\simeq E\mid_{S_{a}^{2}}=E\rightarrow S_{a}^{2}\,.
\end{equation}

For (ii), we have 
\[
f_{0r}^{*}\left(C_{1}\left(E\rightarrow S_{a}^{2}\right)\right)=C_{1}\left(f_{0r}^{*}\left(E\rightarrow S_{a}^{2}\right)\right)=C_{1}\left(E\rightarrow S_{b}^{2}\right),
\]
where the first equal sign is the naturality property of characteristic
classes. The second equal sign is due to the fact that $f_{0r}^{*}\left(E\rightarrow S_{a}^{2}\right)$
and $E\rightarrow S_{b}^{2}$ are two isomorphic bundles on $S_{b}^{2}$,
and thus have the same Chern classes. The first Chern number on $E\rightarrow S_{b}^{2}$
is 
\begin{align}
n_{c}\left(E\rightarrow S_{b}^{2}\right) & =\int_{S_{b}^{2}}C_{1}\left(E\rightarrow S_{b}^{2}\right)\\
 & =\int_{S_{b}^{2}}f_{0r}^{*}\left(C_{1}\left(E\rightarrow S_{a}^{2}\right)\right)\nonumber \\
 & =\int_{S_{a}^{2}}C_{1}\left(E\rightarrow S_{a}^{2}\right)=n_{c}\left(E\rightarrow S_{a}^{2}\right),\nonumber 
\end{align}
where the integral on $S_{b}^{2}$ is evaluated on $S_{a}^{2}$ via
the pullback mechanism in the third equal sign. 
\end{proof}
Theorem \ref{thm:BoundaryIso} says that when the Hermitian line bundle
$E$ is defined on $Sh(a,b),$ $E\rightarrow S_{a}^{2}$ and $E\rightarrow S_{b}^{2}$
are isomorphic bundles and have the same first Chern number. Around
an isolated Weyl point, the Hermitian line bundle is well defined
except at the Weyl point, Theorem \ref{thm:BoundaryIso} states that
all closed surfaces surrounding the Weyl point have the same first
Chern number, which can be viewed as the topological charge associated
with this isolated Weyl point in phase space (see Fig.\,\ref{fig:TopologicalCharge}).

\begin{figure}[ht]
\includegraphics[width=6cm]{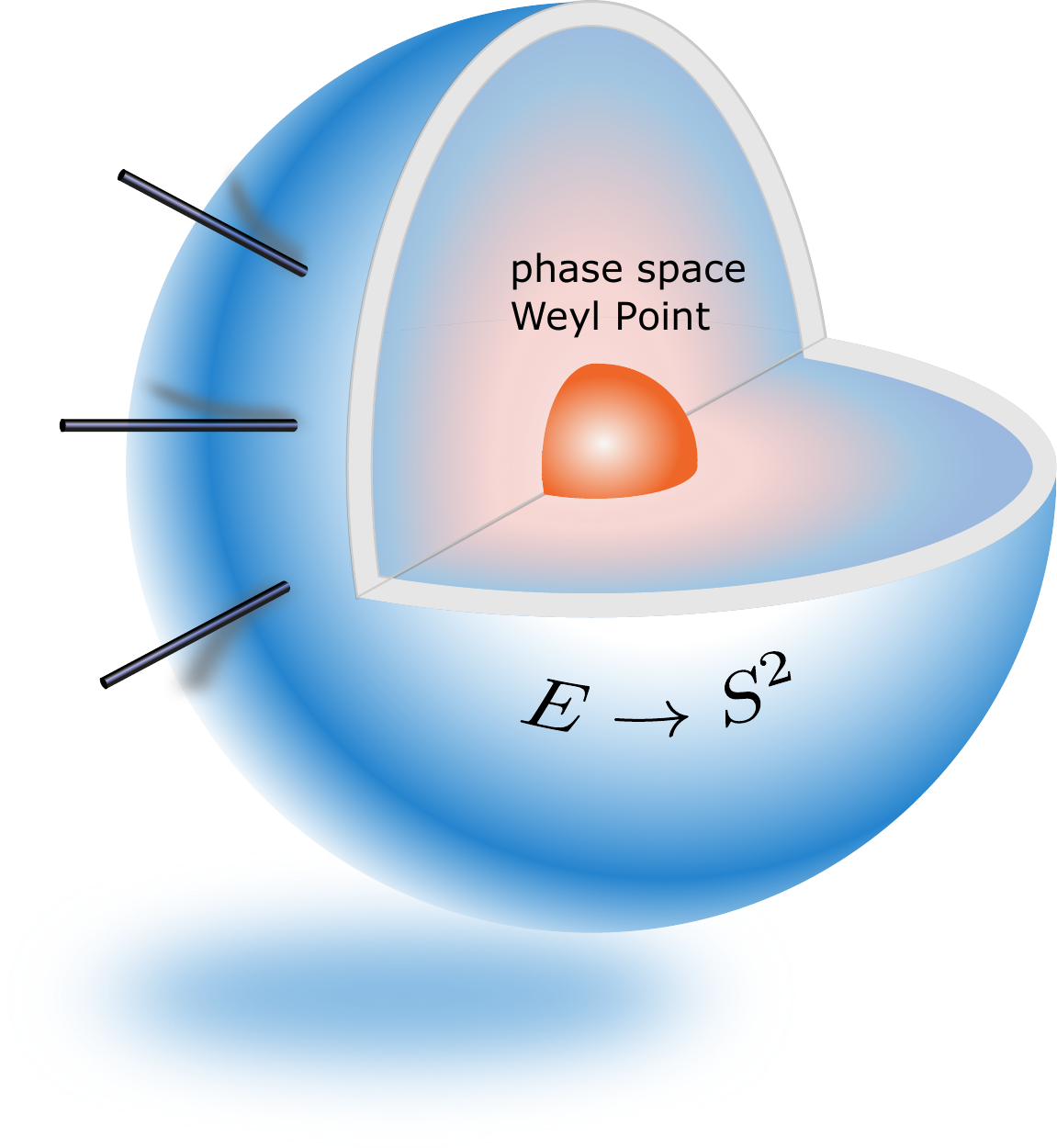} \caption{Topological charge of waves in classical continuous media. Around
an isolated Weyl point, all closed surfaces surrounding the Weyl point
have the same first Chern number, which can be viewed as the topological
charge associated with this isolated Weyl point in phase space. }
\label{fig:TopologicalCharge}
\end{figure}

We also need the following results for the first Chern class for the
plasma wave eigenmode bundles. They can be established straightforwardly. 
\begin{lem}
\label{lem:C1Property}For the 9 bulk eigenmode bundles of the plasma
waves specified by Hamiltonian symbol $H(\boldsymbol{r},\boldsymbol{k})$
defined in Eq.\,(\ref{eq:H}), the following identities for the first
Chern class holds over a general base manifold for the bundles: 
\begin{flalign}
C_{1}\left(\oplus_{j=1}^{4}E_{j}\right) & =C_{1}\left(\oplus_{j=-4}^{-1}E_{j}\right),\\
C_{1}\left(\oplus_{j=-4}^{4}E_{j}\right) & =0,\\
C_{1}\left(E_{0}\right) & =0,\\
C_{1}\left(\oplus_{j=-4}^{1}E_{j}\right) & =C_{1}\left(E_{1}\right).
\end{flalign}
\end{lem}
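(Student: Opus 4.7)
The plan is to deduce all four identities from two structural properties of the Hamiltonian symbol $H$ in Eq.~(\ref{eq:H}): (a) triviality of the full ambient bundle, and (b) an antilinear particle--hole symmetry that pairs $E_n$ with $E_{-n}$. Identity (2) is immediate from (a): since the Hermitian matrix $H(m)$ has nondegenerate eigenspaces at each $m\in M$ that decompose the ambient fiber $\mathbb{C}^9$, the Whitney sum $\oplus_{j=-4}^{4}E_j$ is globally isomorphic to the product bundle $M\times\mathbb{C}^9$, and additivity of $C_1$ over Whitney sums together with its vanishing on trivial bundles yields $C_1(\oplus_{j=-4}^{4}E_j)=0$.

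To implement (b), I will take $K=S\circ(\cdot)^{*}$, where $(\cdot)^{*}$ denotes complex conjugation and $S=\mathrm{diag}(I_{3},I_{3},-I_{3})$ acts on the $(\boldsymbol{v},\boldsymbol{E},\boldsymbol{B})$ block decomposition of Eq.~(\ref{eq:H}). A short blockwise calculation verifies $SH^{*}S^{-1}=-H$, equivalently $KHK^{-1}=-H$: $S$ leaves the cyclotron block $\mathrm{i}\boldsymbol{e}_{z}\times$ and the $\omega_{p}$ off-diagonal blocks invariant while flipping the sign of the $\pm\boldsymbol{k}\times$ blocks coupling $\boldsymbol{E}$ and $\boldsymbol{B}$, after which complex conjugation of the imaginary entries supplies the overall sign. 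Consequently $K$ restricts fiberwise to an antilinear bundle isomorphism $E_{n}\to E_{-n}$, realizing $E_{-n}\cong\bar{E}_{n}$ as Hermitian line bundles. Specializing to $n=0$, the restriction $K\vert_{E_{0}}$ is an antilinear involution on the line bundle $E_{0}$, i.e.\ a real structure $E_{0}\cong\bar{E}_{0}$, which forces $C_{1}(E_{0})=-C_{1}(E_{0})=0$ in de Rham cohomology; this is identity (3).

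Identities (1) and (4) then reduce to additive bookkeeping with (2), (3), and the pairing. For (1), rewriting (2) as $C_{1}(\oplus_{j=1}^{4}E_{j})+C_{1}(E_{0})+C_{1}(\oplus_{j=-4}^{-1}E_{j})=0$ and using (3) gives the Whitney-sum relation between the positive- and negative-frequency band bundles expressed by identity (1). For (4), the regrouping $\oplus_{j=-4}^{1}E_{j}=(\oplus_{j=-4}^{-1}E_{j})\oplus E_{0}\oplus E_{1}$ together with (1), (2), and (3) collapses the total first Chern class to $C_{1}(E_{1})$. The only step requiring explicit computation is the blockwise verification $SH^{*}S^{-1}=-H$; everything else is formal consequences of the additivity of $C_{1}$ on Whitney sums and of its standard behavior under conjugation of line bundles, so this is where I expect to spend the main effort, even though the calculation itself is elementary.
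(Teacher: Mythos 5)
Your verification of the antiunitary symmetry is correct: with $S=\mathrm{diag}(I_{3},I_{3},-I_{3})$ one indeed gets $SH^{*}S^{-1}=-H$, so $K=S\circ(\cdot)^{*}$ maps $E_{n}$ antilinearly onto $E_{-n}$, and your proofs of the second identity (trivial total bundle $\oplus_{j=-4}^{4}E_{j}\cong M\times\mathbb{C}^{9}$) and of the third identity ($E_{0}\cong\bar{E}_{0}$ forces $C_{1}(E_{0})=0$) are sound. The paper itself offers no written proof of this lemma (it only asserts the identities "can be established straightforwardly"), so the comparison must be with the statement itself, and there your argument has a genuine gap in the first and fourth identities. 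An antilinear isomorphism gives $E_{-n}\cong\bar{E}_{n}$, hence $C_{1}(E_{-n})=-C_{1}(E_{n})$ and therefore $C_{1}\left(\oplus_{j=-4}^{-1}E_{j}\right)=-C_{1}\left(\oplus_{j=1}^{4}E_{j}\right)$; your "rewriting of (2) using (3)" yields exactly the same minus-sign relation. The lemma, however, asserts equality with a \emph{plus} sign. The two inputs you use are therefore not independent — combining them gives $0=0$ — and neither delivers the actual content of identity (1), which (given (2) and (3)) is equivalent to the nontrivial claim $\sum_{j=1}^{4}C_{1}(E_{j})=0$. The same missing fact is needed for identity (4): your regrouping gives $C_{1}\left(\oplus_{j=-4}^{1}E_{j}\right)=C_{1}\left(\oplus_{j=-4}^{-1}E_{j}\right)+C_{1}(E_{1})$, and you never show that the first term vanishes.

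To close the gap you need an argument that the positive-frequency block bundle has vanishing first Chern class on the base manifolds actually used. On the spheres $S_{\delta}^{2}$ (and, via Theorem \ref{thm:BoundaryIso}, on $S_{1}^{2}$) the only enclosed degeneracy is the LC Weyl point between $\omega_{1}$ and $\omega_{2}$, so every $E_{j}$ with $j\neq1,2$ extends over the contractible ball $B_{\delta}^{3}$ and is trivial by Theorem \ref{thm:contractible}; then the total-triviality relation (2) forces $C_{1}(E_{1})+C_{1}(E_{2})=0$, whence $\sum_{j=1}^{4}C_{1}(E_{j})=0$ and both (1) and (4) follow. Note that this argument is tied to which degeneracies the base manifold encloses, so as stated "over a general base manifold" the plus-sign identities do not follow from symmetry alone; you should either restrict the claim to the surfaces relevant to the TLCW calculation or supply an additional reason why the band blocks above and below the $\omega_{1}$--$\omega_{2}$ gap are separately trivial.
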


\section{TLCW predicted by Faure's index theorem and algebraic topological
analysis\label{sec:TLCWPrediction}}

\subsection{Faure's Index theorem for TLCW}

In condensed matter physics, the bulk-edge correspondence states that
the gap Chern number equals the number of edge modes in the gap. Mathematically,
the correspondence had been rigorously proved as the Atiyah-Patodi-Singer
(APS) index theorem \citep{Atiyah1976} for spectral flows over $S^{1}$,
which corresponds to the momentum parameter $k_{y}$ in the direction
with spatial translation symmetry of a periodic lattice. However,
for waves in classical continuous media, including waves in plasmas,
the $k_{y}$ parameter is not periodic, and it takes value in $\mathbb{R}$.
Therefore, the APS index theorem proved for spectral flow over $S^{1}$
is not applicable for waves in continuous media without modification.
Recently, Faure \citep{Faure2019} formulated an index theorem for
spectral flows over $\mathbb{R}$-valued $k_{y}$, which links the
spectral flow index to the gap Chern number of the eigenmode bundle
over a 3D ball in the phase space of $(x,k_{x},k_{y}).$ Faure's index
theorem applies to waves in classical continuous media. In this section,
we apply Faure's index theorem and Theorem \ref{thm:BoundaryIso}
to prove the existence of TLCW. For the bulk Hamiltonian symbol $H(x,k_{x},k_{y},k_{z})$
defined in Eq.\,(\ref{eq:Hx}), the global Hamiltonian PDO $\hat{H}(x,-\mathrm{i}\eta\partial_{x},k_{y},k_{z})$
defined in Eq.\,(\ref{eq:Hhatx}), and the 1D equilibrium profile
specified by Eq.\,(\ref{eq: condition1}), we have the following
theorems and definition adapted from Faure \citep{Faure2019}. 
\begin{thm}
\label{thm:SpectralFlow}For a fixed $k_{z}$, assume that $[g_{1},g_{2}]$
is the common gap of $\omega_{1}(x,k_{x},k_{y})$ and $\omega_{2}(x,k_{x},k_{y})$
for parameters exterior to the ball $B_{1}^{3}$. For any $\lambda>0,$
there exists $\eta_{0}>0$ such that

(i) for all $\eta<\eta_{0}$ and $k_{y}\in[-1-\lambda,1+\lambda]$,
$\hat{H}(x,-\mathrm{i}\eta\partial_{x},k_{y},k_{z})$ has no or discrete
spectrum in the gap of $[g_{1}+\lambda,g_{2}-\lambda]$ that depend
on $\eta$ and $k_{y}$ continuously;

(ii) for all $\eta<\eta_{0}$, $\hat{H}(x,-\mathrm{i}\eta\partial_{x},k_{y},k_{z})$
has no spectrum in $[g_{1}-\lambda,g_{2}+\lambda]$ at $k_{y}=\pm(1+\lambda).$ 
\end{thm}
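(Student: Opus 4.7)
The plan is to deploy semi-classical microlocal analysis in the small parameter $\eta$ and construct a symbolic parametrix for $\hat{H}-z$ whenever $z$ lies in the gap. Fix $z\in[g_{1}+\lambda/2,g_{2}-\lambda/2]$. By the common gap hypothesis of Definition~\ref{def:CommGap}, the Hermitian matrix $H(x,k_{x},k_{y},k_{z})-z$ is invertible with $\lVert(H-z)^{-1}\rVert\le 2/\lambda$ uniformly for every $(x,k_{x},k_{y})\notin B_{1}^{3}$. I would pick a smooth cutoff $\chi\in C_{c}^{\infty}(\mathbb{R}^{3})$ that equals $1$ on $B_{1}^{3}$ and vanishes outside $B_{1+\delta}^{3}$ for some small $\delta>0$, and set the approximate resolvent symbol $A=(1-\chi)(H-z)^{-1}$, which is a smooth, bounded, matrix-valued symbol globally on the $(x,k_{x},k_{y})$ phase space.

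Applying the Moyal calculus for the Weyl quantization $\mathrm{Op}_{\eta}$ yields
\begin{equation}
\hat{A}(\hat{H}-z)=\mathrm{Op}_{\eta}\bigl(A(H-z)\bigr)+\eta R_{1}=\mathrm{Op}_{\eta}(1-\chi)+\eta R_{1},
\end{equation}
so that $\hat{A}(\hat{H}-z)+\mathrm{Op}_{\eta}(\chi)=\mathrm{Id}+\eta R_{1}$, where $R_{1}$ is a PDO whose symbol is controlled by first derivatives of $A$ and $H$. Because $\chi$ is compactly supported in the $(x,k_{x})$ plane for each fixed $(k_{y},k_{z})$, $\mathrm{Op}_{\eta}(\chi)$ is a compact operator on $L^{2}(\mathbb{R}_{x};\mathbb{C}^{9})$, and for $\eta<\eta_{0}$ small enough the perturbation $\mathrm{Id}+\eta R_{1}$ is invertible. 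Hence $\hat{H}-z$ admits a left parametrix of the form compact-plus-identity, so by the Fredholm alternative it is Fredholm of index zero and its spectrum in $[g_{1}+\lambda,g_{2}-\lambda]$ is discrete; continuous (indeed Lipschitz) dependence on $(k_{y},\eta)$ then follows from analytic perturbation theory for Fredholm operators. This establishes~(i).

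For~(ii), at $k_{y}=\pm(1+\lambda)$ one has $k_{y}^{2}=(1+\lambda)^{2}>1$, so the entire $(x,k_{x})$ slice lies in the exterior of $B_{1}^{3}$. By the continuity of the Hermitian spectrum $\omega_{1},\omega_{2}$ as functions of phase-space parameters, together with the uniform distance $\lambda$ from the boundary of $B_{1}^{3}$, the gap widens to at least $[g_{1}-\lambda,g_{2}+\lambda]$ on this slice (possibly after absorbing Lipschitz constants into the choice of $\eta_{0}$). With $\chi$ chosen to vanish identically on this slice, the symbol $A$ becomes a global approximate inverse, the compact correction disappears, and $\hat{A}(\hat{H}-z)=\mathrm{Id}+\eta R_{1}$ is invertible for $\eta<\eta_{0}$, proving the absence of spectrum in the widened gap.

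The hard part will be the uniform control of the semi-classical remainder $\eta R_{1}$ over $z$ and $k_{y}$: the matrix symbol $(H-z)^{-1}$ and its derivatives can grow as $z$ approaches one of the other seven bulk bands or as $(x,k_{x},k_{y})$ approaches $\partial B_{1}^{3}$ from outside. Showing that the Moyal product error sits in a semi-classical symbol class admitting $L^{2}$-bounded Weyl quantization, with semi-norm bounds uniform over the relevant parameter ranges, is the technical crux; this is exactly what Faure's framework~\citep{Faure2019} was engineered to handle for matrix-valued Hamiltonians of the type \eqref{eq:Hx}, so in practice the proof amounts to verifying that $H(x,k_{x},k_{y},k_{z})$ belongs to the admissible symbol class and then invoking the semi-classical Fredholm machinery therein.
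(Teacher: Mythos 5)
Your proposal is, in substance, the same as the paper's: the paper's entire proof of Theorem \ref{thm:SpectralFlow} is the single sentence that it is a special case of Theorem 2.2 of Faure \citep{Faure2019}, and your parametrix sketch is a plausible reconstruction of the semi-classical Fredholm argument inside that theorem, which you yourself conclude by reducing to Faure's machinery. The paper carries out none of the symbolic-calculus work you outline (and in particular never addresses the widened interval $[g_{1}-\lambda,g_{2}+\lambda]$ in part (ii) that you rightly flag as needing more than the bare common-gap hypothesis); it simply defers everything to the cited reference.
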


\begin{proof}
This theorem is a special case of Theorem 2.2 in Faure \citep{Faure2019}. 
\end{proof}
Theorem \ref{thm:SpectralFlow} states that the spectrum of $\hat{H}(x,-\mathrm{i}\eta\partial_{x},k_{y},k_{z})$
in the common gap $[g_{1}+\lambda,g_{2}-\lambda]$, if any, must consist
of discrete dispersion curves parameterized by $k_{y}$. Theorem \ref{thm:SpectralFlow}
also stipulates the following ``traffic rules'' for the flow of
the spectrum. The dispersion curves cannot enter or exit the rectangle
region $[-1-\lambda,1+\lambda]\times[g_{1}+\lambda,g_{2}-\lambda]$
on the $k_{y}$-$\omega$ plane from the left or right sides. They
can only enter or exit through the upper or lower sides (see Fig.\,\ref{fig:SpectralFlow}).
Intuitively, a spectral flow is a dispersion curve of $\hat{H}(x,-\mathrm{i}\eta\partial_{x},k_{y},k_{z})$
that can trespass the rectangle. It flows between the lower band and
the upper band, as if transporting one eigenmode upward or downward
through the spectral gap of $H(x,k_{x},k_{y},k_{z})$ for parameters
exterior to the ball $B_{1}^{3}$. We now formally define the spectral
flow and spectral flow index.

\begin{figure}[ht]
\centering \includegraphics[width=10cm]{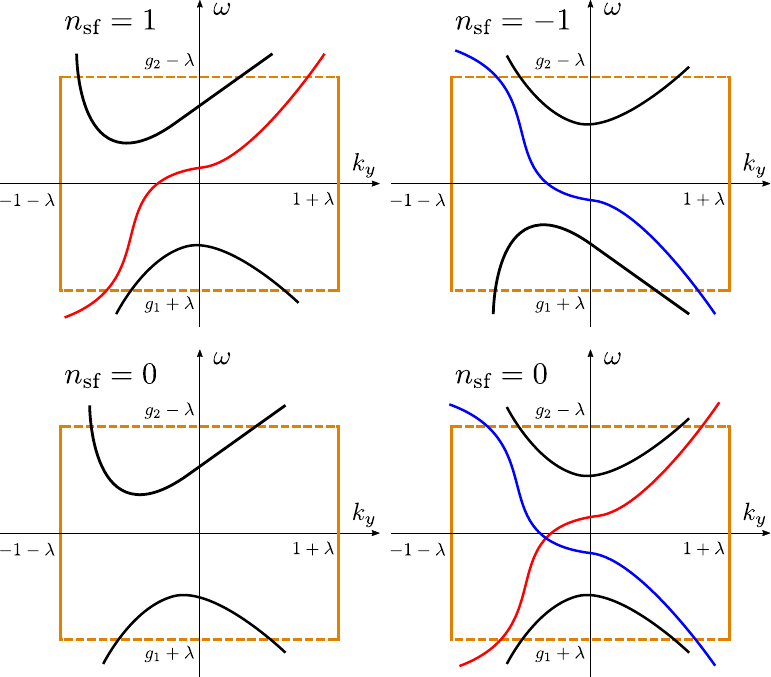} \caption{Illustration of possible spectral flows of $\hat{H}(x,-\mathrm{i}\eta\partial_{x},k_{y},k_{z})$
in the common gap $[g_{1}+\lambda,g_{2}-\lambda]$. Theorem \ref{thm:SpectralFlow}
stipulates the ``traffic rules'' for the flow of the spectrum. Red
curves are spectral flows with index $1$, and blue curves are spectral
flows with index $-1.$}
\label{fig:SpectralFlow}
\end{figure}

\begin{defn}
\label{def:SF}For a fixed $k_{z}$, assume that $[g_{1},g_{2}]$
is the common gap of $\omega_{1}(x,k_{x},k_{y})$ and $\omega_{2}(x,k_{x},k_{y})$
for parameters exterior to the ball $B_{1}^{3}$. A spectral flow
is a smooth dispersion curve $\omega=f(k_{y},\eta)$ of $\hat{H}(x,-\mathrm{i}\eta\partial_{x},k_{y},k_{z})$
satisfying the following condition: For any $\lambda>0,$ there exists
a $\eta_{0}>0$ such that either (i) for all $\eta<\eta_{0}$, $f(-1-\lambda,\eta)<g_{1}+\lambda$
and $f(1+\lambda,\eta)>g_{2}-\lambda$ or (ii) for all $\eta<\eta_{0}$,
$f(-1-\lambda,\eta)>g_{2}-\lambda$ and $f(1+\lambda,\eta)<g_{1}+\lambda$.
For case (i), its index is $1$. For case (ii), its index is $-1$.
The spectral flow index $n_{\text{sf}}$ of $\hat{H}(x,-\mathrm{i}\eta\partial_{x},k_{y},k_{z})$
is the summation of indices of all its spectral flows.

In the present context, a spectral flow of $\hat{H}(x,-\mathrm{i}\eta\partial_{x},k_{y},k_{z})$
is a TLCW. But Theorem \ref{thm:SpectralFlow} and Definition \ref{def:SF}
are valid for any generic $\hat{H}$. 
\end{defn}

A few possible spectral flow configurations are illustrated in Fig.\,\ref{fig:SpectralFlow}.
Strictly speaking, $n_{\text{sf}}$ is not necessarily the total number
of all possible upward and downward spectral flows of $\hat{H}(x,-\mathrm{i}\eta\partial_{x},k_{y},k_{z})$.
It is the net number of upward spectral flows.

For plasma waves, the following theorem links the number of TLCWs
to the first Chern number of the $E_{1}$ eigenmode bundle over a
non-contractible, compact surface in the phase space of $(x,k_{x},k_{y}).$ 
\begin{thm}
\label{thm:SpectralFlowIndex}For a fixed $k_{z},$ assume that the
common gap condition for parameters exterior to the ball $B_{1}^{3}\equiv\left\{ (x,k_{x},k_{y})\mid x^{2}+k_{x}^{2}+k_{y}^{2}\le1\right\} $
is satisfied for the spectra $\omega_{1}(x,k_{x},k_{y},k_{z})$ and
$\omega_{2}(x,k_{x},k_{y},k_{z})$ of $H(x,k_{x},k_{y},k_{z})$. The
spectral flow index of $\hat{H}(x,-\mathrm{i}\eta\partial_{x},k_{y},k_{z})$
in the gap equals the Chern number $n_{c}\left(E_{1}\rightarrow S_{1}^{2}\right)$
of the $E_{1}$ eigenmode bundle of $H(x,k_{x},k_{y},k_{z})$ over
$S_{1}^{2}\equiv\left\{ (x,k_{x},k_{y})\mid x^{2}+k_{x}^{2}+k_{y}^{2}\le1\right\} $,
i.e., $n_{\text{sf}}=n_{c}\left(E_{1}\rightarrow S_{1}^{2}\right)$. 
\end{thm}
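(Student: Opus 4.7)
The plan is to reduce Theorem \ref{thm:SpectralFlowIndex} to a chain of equalities: the spectral flow index equals the first Chern number of the full ``below-gap'' eigenmode bundle $\oplus_{j \leq 1}E_j$ over any sphere enclosing the singular set (by Faure's index theorem), which collapses to $C_1(E_1)$ (by Lemma \ref{lem:C1Property}), and this integer is invariant across the choice of enclosing sphere (by the Boundary isomorphism Theorem \ref{thm:BoundaryIso}). The statement to be proved just pins this invariant to the particular sphere $S_1^2$ appearing in the common gap condition.

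Concretely, I would proceed as follows. First, I would invoke Faure's index theorem in the form stated in \citep{Faure2019}: under the common gap condition for $(x,k_x,k_y)$ exterior to $B_1^3$, the spectral flow index of $\hat{H}(x,-\mathrm{i}\eta\partial_x,k_y,k_z)$ in the gap equals the first Chern number of the direct sum of all eigenmode bundles whose eigenfrequencies lie below the gap, evaluated over any smooth closed surface enclosing the region where the gap could close. In our setting, the eigenmodes below the gap are $\{\psi_{-4},\ldots,\psi_0,\psi_1\}$, so
\begin{equation}
n_{\text{sf}} \;=\; n_c\!\left(\oplus_{j=-4}^{1} E_j \to \Sigma\right)
\end{equation}
for any such enclosing surface $\Sigma$. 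The common gap condition of Definition \ref{def:CommGap} guarantees that on the complement of $B_1^3$ the bands $\omega_1$ and $\omega_2$ are separated by a definite interval, so the bundle $\oplus_{j=-4}^{1}E_j$ is well-defined (no degeneracy with $\omega_2$) on all of $\mathbb{R}^3 \setminus B_1^3$, and in particular on any shell $Sh_{(1,R)}$ for $R > 1$.

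Second, I would apply Lemma \ref{lem:C1Property} to simplify the below-gap bundle. Since $C_1(E_0)=0$, $C_1(\oplus_{j=-4}^{-1}E_j)=C_1(\oplus_{j=1}^{4}E_j)$, and $C_1(\oplus_{j=-4}^{4}E_j)=0$, the lemma yields directly $C_1\!\left(\oplus_{j=-4}^{1}E_j\right) = C_1(E_1)$. Integrating over $\Sigma$ gives
\begin{equation}
n_c\!\left(\oplus_{j=-4}^{1} E_j \to \Sigma\right) \;=\; n_c\!\left(E_1 \to \Sigma\right).
\end{equation}

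Third, I would use the Boundary isomorphism Theorem \ref{thm:BoundaryIso} to replace $\Sigma$ by $S_1^2$. Choose any $R>1$ large enough so that the enclosing surface $\Sigma$ provided by Faure's theorem lies inside the ball of radius $R$, and embed both $\Sigma$ and $S_1^2$ as concentric-style boundaries of a shell region $Sh_{(1,R)}$ on which $E_1$ is globally well-defined (using the common gap condition so that $\omega_1$ stays strictly separated from $\omega_0$ and $\omega_2$). Theorem \ref{thm:BoundaryIso}(ii) then gives $n_c(E_1 \to \Sigma) = n_c(E_1 \to S_1^2)$; if $\Sigma$ is not literally a sphere, apply the same compressing-homotopy argument used in the proof of Theorem \ref{thm:BoundaryIso} to deform $\Sigma$ to a sphere inside the shell. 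Chaining the three equalities yields $n_{\text{sf}} = n_c(E_1 \to S_1^2)$.

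The main obstacle is the first step: quoting Faure's theorem in exactly the form needed and verifying that the hypotheses (ellipticity/properness of the symbol at infinity in $\boldsymbol{k}$, uniform gap outside a compact set, correct counting convention for below-gap modes) are all met by the plasma Hamiltonian $H(x,k_x,k_y,k_z)$ of Eq.\,(\ref{eq:Hx}). The gap-at-infinity hypothesis is delicate because $H$ grows linearly in $\boldsymbol{k}$ and the bulk bands $\omega_{\pm 2},\omega_{\pm 3},\omega_{\pm 4}$ fan out at large $|\boldsymbol{k}|$; one must check that $\omega_1$ and $\omega_2$ remain separated there, which is precisely the content of the common gap assumption. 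The remaining algebraic-topology steps are then essentially mechanical applications of Lemma \ref{lem:C1Property} and Theorem \ref{thm:BoundaryIso}.
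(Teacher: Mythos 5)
Your proposal is correct and follows essentially the same route as the paper: quote Faure's index theorem (Theorem 2.7 of \citep{Faure2019}) to get $n_{\text{sf}}=n_{c}\left(\oplus_{j\le1}E_{j}\rightarrow S_{1}^{2}\right)$, then collapse the below-gap bundle to $E_{1}$ via Lemma \ref{lem:C1Property}. Your third step invoking Theorem \ref{thm:BoundaryIso} to move between enclosing surfaces is harmless but unnecessary, since Faure's theorem as used in the paper is already stated over the sphere $S_{r}^{2}$ bounding the ball in the common gap condition.
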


\begin{proof}
This theorem is a direct specialization of Theorem 2.7 formulated
by Faure in Ref.\,\citep{Faure2019}, which states that when a spectral
gap exists between $\omega_{l}$ and $\omega_{l+1}$ of a bulk Hamiltonian
for all parameters exteriors to $B_{r}^{3}$, the spectral flow index
$n_{\text{sf}}$ of the corresponding PDO in the gap equals the gap
Chern number $n_{c}\left(\oplus_{j\le l}E_{j}\rightarrow S_{r}^{2}\right)$.
For the plasma waves satisfying the common gap condition stated, $l=1$
and 
\[
n_{\text{sf}}=n_{c}\left(\oplus_{j\le1}E_{j}\rightarrow S_{1}^{2}\right)=n_{c}\left(E_{1}\rightarrow S_{1}^{2}\right),
\]
where use is made of Lemma \ref{lem:C1Property}. 
\end{proof}

\subsection{Index calculation of TLCW using algebraic topological techniques }

Theorem \ref{thm:SpectralFlowIndex} links the number of TLCWs, or
the spectral flow index of $\hat{H}(x,-\mathrm{i}\eta\partial_{x},k_{y},k_{z})$,
to the Chern number $n_{c}\left(E_{1}\rightarrow S_{1}^{2}\right)$
of the $E_{1}$ eigenmode bundle of $H(x,k_{x},k_{y},k_{z})$ over
$S_{1}^{2}\equiv\left\{ (x,k_{x},k_{y})\mid x^{2}+k_{x}^{2}+k_{y}^{2}\le1\right\} $.
However, it is not an easy task to calculate $n_{c}\left(E_{1}\rightarrow S_{1}^{2}\right)$
either analytically or numerically. Here, we use the algebraic topological
tools developed in Sec.\,\ref{subsec:NonTrivialTop} to analytically
calculate $n_{c}\left(E_{1}\rightarrow S_{1}^{2}\right).$

Because for the 1D equilibrium profile specified by Eq.\,(\ref{eq: condition1}),
the LC Weyl point only occurs at $x=0,$ and the eigenmode bundle
$E_{1}$ is well-defined in $B_{1}^{3}/(0,0,0)$, we can invoke Theorem
\ref{thm:BoundaryIso} to calculate $n_{c}\left(E_{1}\rightarrow S_{1}^{2}\right)$
as 
\begin{equation}
n_{c}\left(E_{1}\rightarrow S_{1}^{2}\right)=\lim_{\delta\rightarrow0^{+}}n_{c}\left(E_{1}\rightarrow S_{\delta}^{2}\right).\label{eq:ncepsilon}
\end{equation}
The right-hand side of Eq.\,(\ref{eq:ncepsilon}) is the first Chern
number of the $E_{1}$ bundle over an infinitesimal sphere surrounding
the Weyl point in the phase space of $(x,k_{x},k_{y})$, and it can
be analytically evaluated using Taylor expansion at the Weyl point
as follows.

At the LC Weyl point $(x,k_{x},k_{y})=(0,0,0)$, the spectrum and
eigenmodes of $H(x,k_{x},k_{y},k_{z})$ can be solved analytically.
Denote by $\left(\omega_{j0},\psi_{j0}\right)$ the $j$-th eigenmode.
At this point, two of the eigenmodes with positive frequencies resonant,
\begin{equation}
\omega_{10}=\omega_{20}=\omega_{\text{pc}}=\dfrac{\sqrt{k_{z}^{4}+4k_{z}^{2}}-k_{z}^{2}}{2},
\end{equation}
and the corresponding eigenmodes are 
\begin{alignat*}{1}
\psi_{10} & =\left(0,0,-\frac{\mathrm{i}}{\sqrt{2}},0,0,\frac{1}{\sqrt{2}},0,0,0\right)^{\mathrm{T}},\\
\psi_{20} & =\left(\mathrm{i}k_{z},-k_{z},0,\frac{\omega_{\text{pc}}}{k_{z}},\mathrm{i}\frac{\omega_{\text{pc}}}{k_{z}},0,-\mathrm{i},1,0\right)^{\mathrm{T}},
\end{alignat*}
where $\psi_{10}$ is the Langmuir wave and $\psi_{20}$ is the cyclotron
wave. In the infinitesimal neighborhood of the Weyl point, $k_{x}\sim k_{y}\sim x\sim\delta$,

\begin{alignat}{1}
H(x,k_{x},k_{y},k_{z}) & =H_{0}+\delta H,\\
H_{0}(x,k_{x},k_{y},k_{z}) & =\begin{pmatrix}\mathrm{i}\boldsymbol{e}_{z}\times & -\mathrm{i}\omega_{\text{pc}} & 0\\
\mathrm{i}\omega_{\text{pc}} & 0 & (0,0,-k_{z})\times\\
0 & (0,0,k_{z})\times & 0
\end{pmatrix},\\
\delta H(x,k_{x},k_{y},k_{z}) & =\begin{pmatrix}0 & -\mathrm{i}\omega_{\text{p}}^{\prime}(x)x & 0\\
\mathrm{i}\omega_{\text{p}}^{\prime}(x)x & 0 & (-k_{x},-k_{y,}0)\times\\
0 & (k_{x},k_{y,}0)\times & 0
\end{pmatrix}.
\end{alignat}
We can express $H$ in the basis of $\psi_{j0}$ $(-4\le j\le4)$.
But for modes with $\delta\omega=$$\omega-\omega_{\text{pc}}\sim\delta$,
$H$ can approximated by the expansion using $\psi_{10}$ and $\psi_{20}$
only, and reduces to a $2\times2$ matrix, 
\begin{alignat}{1}
H_{2} & (x,k_{x},k_{y},k_{z}):=\begin{pmatrix}\psi_{10}^{\dagger}H\psi_{10} & \psi_{10}^{\dagger}H\psi_{20}\\
\psi_{20}^{\dagger}H\psi_{10} & \psi_{20}^{\dagger}H\psi_{20}
\end{pmatrix}=\begin{pmatrix}\omega_{\text{pc}}+\delta\omega_{\text{p}} & \dfrac{-k_{x}-\mathrm{i}k_{y}}{\sqrt{2}\alpha}\\[10pt]
\dfrac{-k_{x}+\mathrm{i}k_{y}}{\sqrt{2}\alpha} & \omega_{\text{pc}}-\dfrac{4\omega_{\text{pc}}}{\alpha^{2}}\delta\omega_{\text{p}}
\end{pmatrix},\label{eq:H2}\\[5pt]
\alpha & \equiv\sqrt{4+3k_{z}^{2}-k_{z}\sqrt{4+k_{z}^{2}}}\thinspace,\quad\delta\omega_{\text{p}}=-\beta x\thinspace,\quad\beta\equiv\left|\dfrac{\mathrm{d}\omega_{\text{p}}}{\mathrm{d}x}\right|_{x=0}\geq0,
\end{alignat}
where we used the fact that the equilibrium profile $\omega_{\mathrm{p}}(x)$
selected in Eq.\,(\ref{eq: condition1}) decreases monotonically.
The eigen system of $H_{2}$ can be solved straightforwardly. The
two eigenfrequencies of $H_{2}$ are 
\begin{alignat}{1}
\omega_{1} & =\omega_{\text{pc}}-\frac{\beta}{2}\left(1-\frac{4\omega_{\text{pc}}}{\alpha^{2}}\right)x-\gamma\thinspace,\label{eq:om1}\\
\omega_{2} & =\omega_{\text{pc}}-\frac{\beta}{2}\left(1-\frac{4\omega_{\text{pc}}}{\alpha^{2}}\right)x+\gamma\thinspace,\label{eq:om2}\\
\gamma & \equiv\sqrt{\frac{k_{x}^{2}+k_{y}^{2}}{2\alpha^{2}}+\frac{x^{2}\beta^{2}}{4}\left(1+\frac{4\omega_{\text{pc}}}{\alpha^{2}}\right)^{2}}\thinspace.
\end{alignat}
The corresponding eigenmodes, expressed in the basis of $\psi_{10}$
and $\psi_{20}$ , are 
\begin{alignat}{1}
\tilde{\psi}_{1} & =\left(\alpha\beta\left(1+\frac{4\omega_{\text{pc}}}{\alpha^{2}}\right)x+2\alpha\gamma,\sqrt{2}(k_{x}-\mathrm{i}k_{y})\right)^{\mathrm{T}}\thinspace,\\
\tilde{\psi}_{2} & =\left(\alpha\beta\left(1+\frac{4\omega_{\text{pc}}}{\alpha^{2}}\right)x-2\alpha\gamma,\sqrt{2}(k_{x}-\mathrm{i}k_{y})\right)^{\mathrm{T}}.
\end{alignat}
Everywhere except $(x,k_{x},k_{y})=(0,0,0)$ in the parameter space,
we have $\omega_{1}<\omega_{2}$, so the $E_{1}$ eigenmode bundle
of $H$ is faithfully represented by $\tilde{\psi}_{1}$ when $\delta$
is small but non-vanishing. What matters for the present study is
the first Chern number $n_{c}\left(E_{1}\rightarrow S_{\delta}^{2}\right),$
which can be obtained by counting the number of zeros of $\tilde{\psi}_{1}$
on $S_{\delta}^{2}$, according to Theorem \ref{thm:Chern}.

On $S_{\delta}^{2}$, $\tilde{\psi}_{1}$ is well-defined everywhere,
and has one zero at $(x,k_{x},k_{y})=(-\delta,0,0)$. The index of
this zero can be calculated according to Definition \ref{def:ZeroIndex}
as follows. We select the following local frame for $E_{1}\rightarrow S_{\delta}^{2}$
in the neighborhood of $(x,k_{x},k_{y})=(-\delta,0,0)$, 
\begin{equation}
e=\left(\frac{\alpha\beta\left(1+\frac{4\omega_{\text{pc}}}{\alpha^{2}}\right)x+2\alpha\gamma}{(k_{x}-\mathrm{i}k_{y})},\sqrt{2}\right)^{\mathrm{T}}\thinspace.
\end{equation}
It is easy to verify that $e$ is well-defined in the neighborhood
of $(x,k_{x},k_{y})=(-\delta,0,0)$ on $S_{\delta}^{2}$, especially
at the point of $(x,k_{x},k_{y})=(-\delta,0,0)$ itself. Note that
$e$ is singular at $(x,k_{x},k_{y})=(\delta,0,0)$ on $S_{\delta}^{2}$,
therefore it is not a (global) section of bundle $E_{1}\rightarrow S_{\delta}^{2}$.
The expression of the section $\tilde{\psi}_{1}$ in the $e$ frame
is $(k_{x}-\mathrm{i}k_{y})$. In one turn on $S_{\delta}^{2}$ circulating
$(x,k_{x},k_{y})=(-\delta,0,0)$, for example on a circle with a fixed
$x$ near $(x,k_{x},k_{y})=(-\delta,0,0)$, the phase increase of
$(k_{x}-\mathrm{i}k_{y})$ is $2\pi.$ Thus, we conclude that $\text{Ind}\left((x,k_{x},k_{y})=(-\delta,0,0)\right)=1$.

According to Theorem \ref{thm:Chern}, 
\[
n_{c}\left(E_{1}\rightarrow S_{\delta}^{2}\right)=\text{Ind}\left((x,k_{x},k_{y})=(-\delta,0,0)\right)=1
\]
And from Eq.\,(\ref{eq:ncepsilon}) and Theorem \ref{thm:SpectralFlowIndex},
\[
n_{\text{sf}}=n_{c}\left(E_{1}\rightarrow S_{1}^{2}\right)=n_{c}\left(E_{1}\rightarrow S_{\delta}^{2}\right)=1.
\]
We conclude that there is one net upward spectral flow, i.e., the
TLCW, if the common gap condition is satisfied.

\section{An analytical model for TLCW by a tilted phase space Dirac cone\label{sec:AnalyticalTDC}}

As shown in the Sec.\,\ref{sec:TLCWPrediction}, near the LC Weyl
point only the Langmuir wave and the cyclotron wave are important,
and the $9\times9$ bulk Hamiltonian symbol $H(x,k_{x},k_{y},k_{z})$
can be approximated by the $2\times2$ reduced bulk Hamiltonian symbol
$H_{2}(x,k_{x},k_{y},k_{z})$. For the bulk modes of $H(x,k_{x},k_{y},k_{z})$,
the prominent feature near the LC Weyl point is the tilted Dirac cone
shown in Fig.\,\ref{fig:DiracCone}. This interesting structure is
faithfully captured by $H_{2}(x,k_{x},k_{y},k_{z})$. For comparison,
the tilted phase space Dirac cone of $H_{2}(x,k_{x},k_{y},k_{z})$
is plotted in Fig.\,\ref{fig:DiracConeH2}. From the definition of
$H_{2}(x,k_{x},k_{y},k_{z}),$ it is clear that the factor $4\omega_{\text{pc}}/\alpha^{2}$
is the reason for the cone being tilted.

\begin{figure}[ht]
\centering \includegraphics[width=8cm]{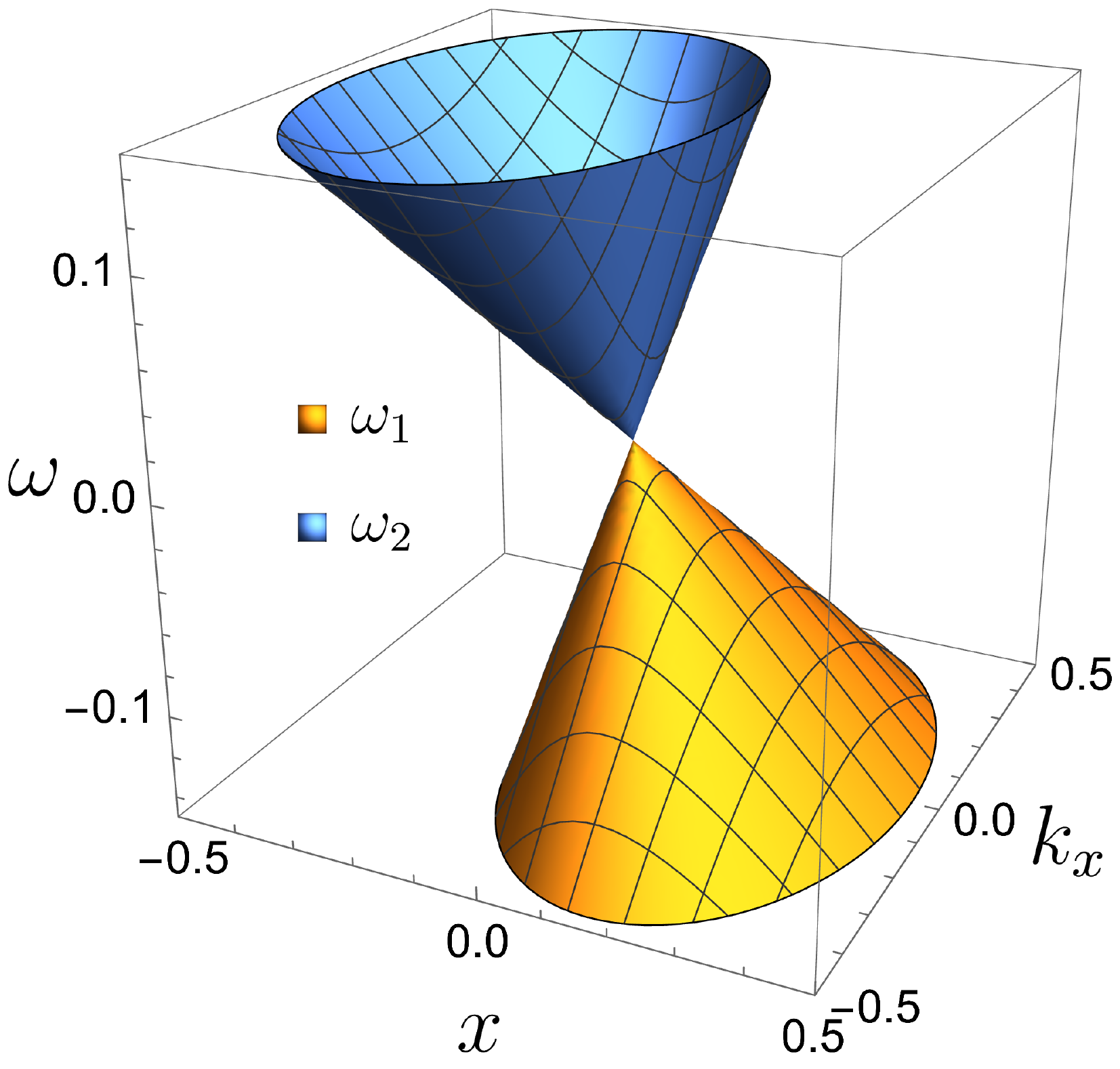} \caption{The tilted phase space Dirac cone of $H_{2}(x,k_{x},k_{y},k_{z})$
at the LC Weyl point. It faithfully represents the tilted Dirac cone
of $H(x,k_{x},k_{y},k_{z})$ shown in Fig.\,\ref{fig:DiracCone}.}
\label{fig:DiracConeH2}
\end{figure}

For $H_{2}(x,k_{x},k_{y},k_{z})$, the corresponding PDO is 
\begin{equation}
\hat{H}_{2}(x,k_{x},k_{y},k_{z})=\begin{pmatrix}\omega_{\text{pc}}-\beta x & \dfrac{\mathrm{i}}{\sqrt{2}\alpha}(\eta\partial_{x}-k_{y})\\[10pt]
\dfrac{\mathrm{i}}{\sqrt{2}\alpha}(\eta\partial_{x}+k_{y}) & \omega_{\text{pc}}+\dfrac{4\omega_{\text{pc}}}{\alpha^{2}}\beta x
\end{pmatrix}.\label{eq:H2hat}
\end{equation}
The theorems proved in Sec.\,\ref{sec:TLCWPrediction} shows that
the full system $\hat{H}(x,-\mathrm{i}\eta\partial_{x},k_{y},k_{z})$
admits one topological edge mode, i.e., the TLCW, as confirmed by
numerical solutions in Secs.\,\ref{sec:Problem-statement} and \ref{sec:Numerial}.
This property of the TLCW is also faithfully captured by the reduced
system $\hat{H}_{2}(x,k_{x},k_{y},k_{z})$.

In particular, Theorem \ref{thm:SpectralFlowIndex} applies to $\hat{H}_{2}(x,k_{x},k_{y},k_{z})$
as well. From Eqs.\,(\ref{eq:om1}) and (\ref{eq:om2}), the common
gap condition is satisfied for $\omega_{1}$ and $\omega_{2}$, and
the proof of Theorem \ref{thm:SpectralFlowIndex} shows that the Chern
number $n_{c}\left(E_{1}\rightarrow S_{1}^{2}\right)$ of the first
eigenmode bundle of $H_{2}(x,k_{x},k_{y},k_{z})$ over $S_{1}^{2}$
equals $1.$ Thus, $\hat{H}_{2}(x,k_{x},k_{y},k_{z})$ admits one
spectral flow, i.e., the TLCW.

To thoroughly understand the physics of a tilted phase space Dirac
cone and the TLCW, we present here the analytical solution of the
entire spectrum of $\hat{H}_{2}(x,k_{x},k_{y},k_{z})$, including
its spectral flow in the band gap. For the PDO corresponding to a
$2\times2$ symbol of a straight Dirac cone, its analytical solution
has been given by Faure \citep{Faure2019}. But for the PDO corresponding
to a $2\times2$ symbol of a tilted Dirac cone, we are not aware of
any previous analytical solution.

To analytically solve for its spectrum, we first simplify the matrix
$\hat{H}_{2}(x,k_{x},k_{y},k_{z})$ in Eq.\,(\ref{eq:H2hat}). Subtract
the entire spectrum by $\omega_{\mathrm{pc}}$ and renormalize $x$
and $k_{y}$ as follows, 
\begin{align}
\tilde{x}:=\sqrt{\dfrac{\sqrt{2}\alpha\beta\kappa}{\eta}}x,\quad\tilde{k}_{y}:=\dfrac{k_{y}}{\sqrt{\sqrt{2}\alpha\beta\eta\kappa}},
\end{align}
where $\kappa^{2}=4\omega_{\mathrm{pc}}/\alpha^{2}$. Matrix $\hat{H}_{2}$
in Eq.\,(\ref{eq:H2hat}) then simplifies to
\begin{align}
\hat{H}_{2}=\sqrt{\dfrac{\sqrt{2}\beta\eta\kappa}{\alpha}}\dfrac{1}{\sqrt{2}}\begin{pmatrix}-\tilde{x}/\kappa & \mathrm{i}(\partial_{\tilde{x}}-\tilde{k}_{y})\\
\mathrm{i}(\partial_{\tilde{x}}+\tilde{k}_{y}) & \kappa\tilde{x}
\end{pmatrix}.
\end{align}
It is clear that $\kappa$ is a parameter measuring how tilted the
Dirac cone is. When $\kappa=1$, the Dirac cone is straight. 

From now on in this section, the overscript tilde in $\tilde{x}$
and $\tilde{k}_{y}$ will be omitted for simple notation. We further
transform $\hat{H}_{2}$ by a similarity transformation and scaling,
\begin{align}
\hat{H}_{2}' & =\sqrt{\dfrac{\alpha}{\sqrt{2}\beta\eta\kappa}}R\hat{H}_{2}R^{-1}=\dfrac{1}{\sqrt{2}}\begin{pmatrix}-x/\kappa & \mathrm{i}(\partial_{x}-k_{y})/\kappa\\
\mathrm{i}\kappa(\partial_{x}+k_{y}) & \kappa x
\end{pmatrix},\\
R & =\mathrm{diag}(\kappa,1).
\end{align}
 $\hat{H}_{2}'$ can be expressed using Pauli matrices and the identity
matrix $\sigma_{0}$ as
\begin{align}
\sqrt{2}\hat{H}_{2}' & =\mathrm{i}(\mu_{2}k_{y}+\mu_{1}\partial_{x})\sigma_{x}+(\mu_{1}k_{y}+\mu_{2}\partial_{x})\sigma_{y}-\mu_{1}x\sigma_{z}+\mu_{2}x\sigma_{0},\\[5pt]
\mu_{1} & =\dfrac{1}{2}\left(\kappa+\dfrac{1}{\kappa}\right),\quad\mu_{2}=\dfrac{1}{2}\left(\kappa-\dfrac{1}{\kappa}\right).
\end{align}

We next apply a unitary transformation to cyclically rotate Pauli
matrices such that $(\sigma_{x},\sigma_{y},\sigma_{z},\sigma_{0})\to(\sigma_{y},\sigma_{z},\sigma_{x},\sigma_{0})$.
Under this rotation, $\hat{H}_{2}'$ becomes 
\begin{align}
\hat{H}_{2}''=\begin{pmatrix}\mu_{1}\lambda+\mu_{2}\hat{a} & \mu_{2}\lambda-\mu_{1}\hat{a}^{\dagger}\\
-\mu_{2}\lambda-\mu_{1}\hat{a} & -\mu_{1}\lambda+\mu_{2}\hat{a}^{\dagger}
\end{pmatrix},\label{eq:H2transform}
\end{align}
where $\lambda=k_{y}/\sqrt{2}$ and 
\begin{align}
\hat{a}=\dfrac{1}{\sqrt{2}}(x+\partial_{x}),\quad\hat{a}^{\dagger}=\dfrac{1}{\sqrt{2}}(x-\partial_{x}).
\end{align}
are annihilation and creation operators. Notice that $\mu_{1}=1$
and $\mu_{2}=0$ when $\kappa=1$, and this is the special case when
$\hat{H}_{2}''$ reduces to a Hamiltonian corresponding to a straight
Dirac cone \citep{Faure2019,Delplace2022}. We now construct an analytical
solution of $\hat{H}_{2}''$. 

Recall that the eigenstates of a quantum harmonic oscillator $|n\rangle$
can be represented by the Hermite polynomials $H_{n}(x)$ as
\begin{align}
\langle x|n\rangle=\varphi_{n}(x)=\frac{1}{\left(2^{n}n!\sqrt{\pi}\right)^{1/2}}\mathrm{e}^{-\frac{x^{2}}{2}}H_{n}(x).
\end{align}
Define a set of shifted wave functions $|n;\delta\rangle$ by 
\begin{equation}
\langle x|n;\delta\rangle:=\varphi_{n}(x+\sqrt{2}\delta).
\end{equation}
They satisfy the following iteration relations,
\begin{align}
\hat{a}^{\dagger}|n;\delta\rangle & =\sqrt{n+1}|n+1;\delta\rangle-\delta|n;\delta\rangle,\\
\hat{a}|n;\delta\rangle & =\sqrt{n}|n-1;\delta\rangle-\delta|n;\delta\rangle.
\end{align}

With these shifted wave functions as basis, it can be verified that
$\hat{H}_{2}''$ has two sets of eigenvectors,
\begin{align}
\psi_{n}^{\pm}=\begin{pmatrix}|n+1;\delta_{n}^{\pm}\rangle\\
\gamma_{n}^{\pm}|n;\delta_{n}^{\pm}\rangle
\end{pmatrix},\quad & n=0,1,2,\cdots\thinspace,\label{eq:eigenmods}
\end{align}
where
\[
\gamma_{n}^{\pm}=\dfrac{\sqrt{n+1}}{-\lambda\mp\sqrt{\lambda^{2}+n+1}},\quad\delta_{n}^{\pm}=\pm\dfrac{\mu_{2}}{\mu_{1}}\sqrt{\lambda^{2}+n+1}.
\]
The corresponding eigenvalues are 
\begin{align}
E_{n}^{\pm}=\pm\dfrac{2\kappa}{1+\kappa^{2}}\sqrt{\lambda^{2}+n+1},\quad n=0,1,2,\cdots\thinspace.
\end{align}

Importantly, there is one additional eigenstate that is not included
in Eq.\,(\ref{eq:eigenmods}), which, in fact, represents the spectrum
flow. Its eigenvector and eigenvalue are 
\begin{align}
\psi_{-1}=\begin{pmatrix}|0;\delta_{-1}\rangle\\
0
\end{pmatrix},\quad E_{-1}=\dfrac{2\kappa}{1+\kappa^{2}}\lambda,
\end{align}
where
\begin{align}
\delta_{-1}=\dfrac{\mu_{2}}{\mu_{1}}\lambda.
\end{align}
Here, we abusively denote this eigenmode as the ``$n=-1$'' eigenstate.
The spectral flow is a linear function of $k_{y}$, and its mode structure
is a shifted Gaussian function. 

The spectrum of $\hat{H}_{2}''(x,-\mathrm{i}\partial_{x},\lambda)$
are plotted in Fig.\,\ref{fig:TDCspectrum}(a). The spectrum consists
of three parts, the upper and lower parts are the global modes in
the frequency bands of $H_{2}(x,k_{x},k_{y},k_{z})$. The middle part
is the single spectral flow connecting the left of the lower part
to the right of the upper part. Note that the tilted Dirac cone of
$H_{2}(x,k_{x},k_{y},k_{z})$ breaks up into two pieces in the global
modes of $\hat{H}_{2}''(x,-\mathrm{i}\partial_{x},\lambda)$. In Fig.\,\ref{fig:TDCspectrum}(b),
the analytical solution of the mode structure of the spectral flow
of $\hat{H}_{2}''(x,-\mathrm{i}\partial_{x},\lambda)$ is plotted.
The analytical result displayed in Fig.\,\ref{fig:TDCspectrum} agrees
well with the numerical solution shown in Fig.\,\ref{fig:1Dspectrum}.

\begin{figure}[ht]
\centering \includegraphics[height=5.7cm]{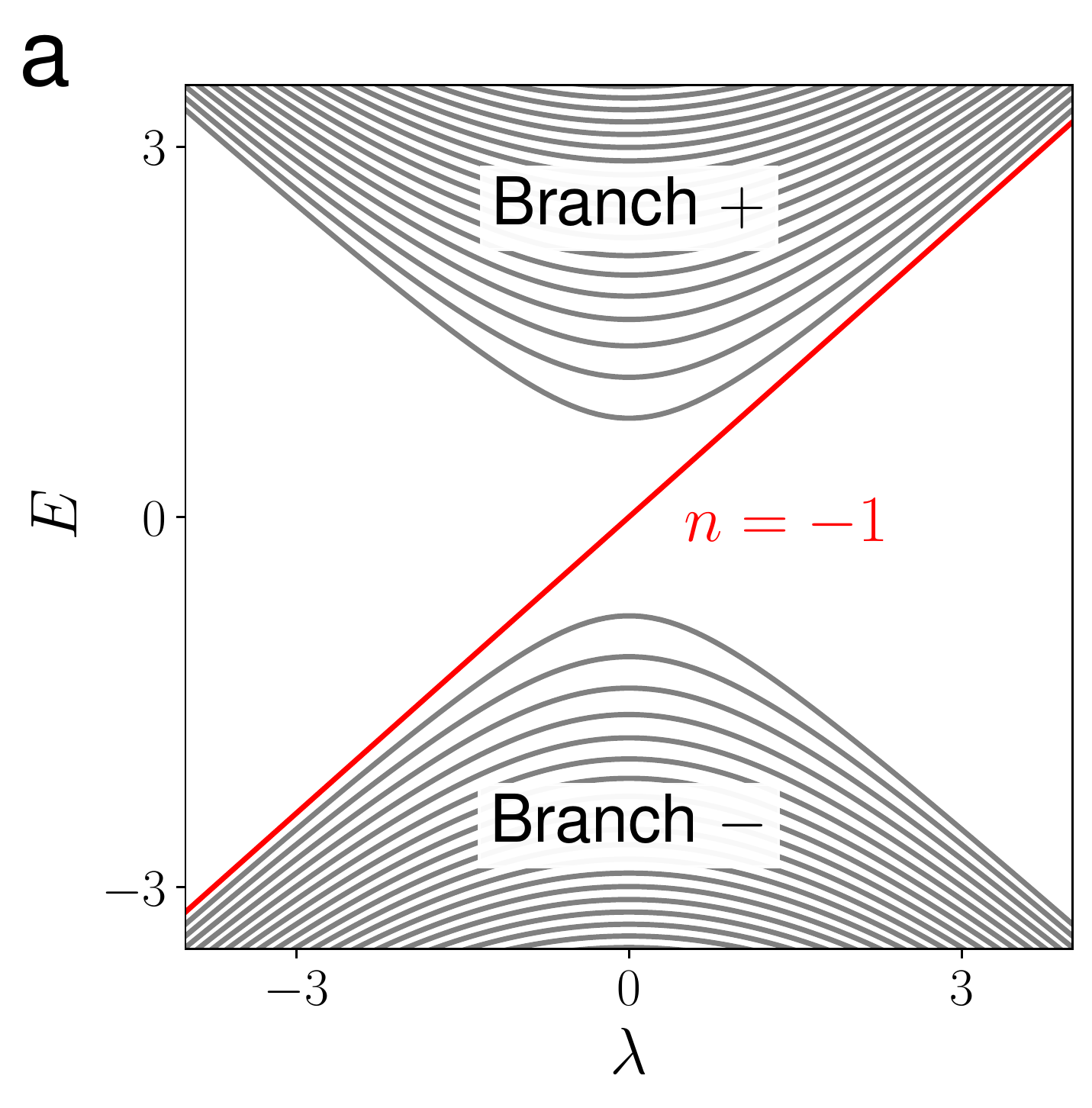} \hspace{0.5cm}
\includegraphics[height=6cm]{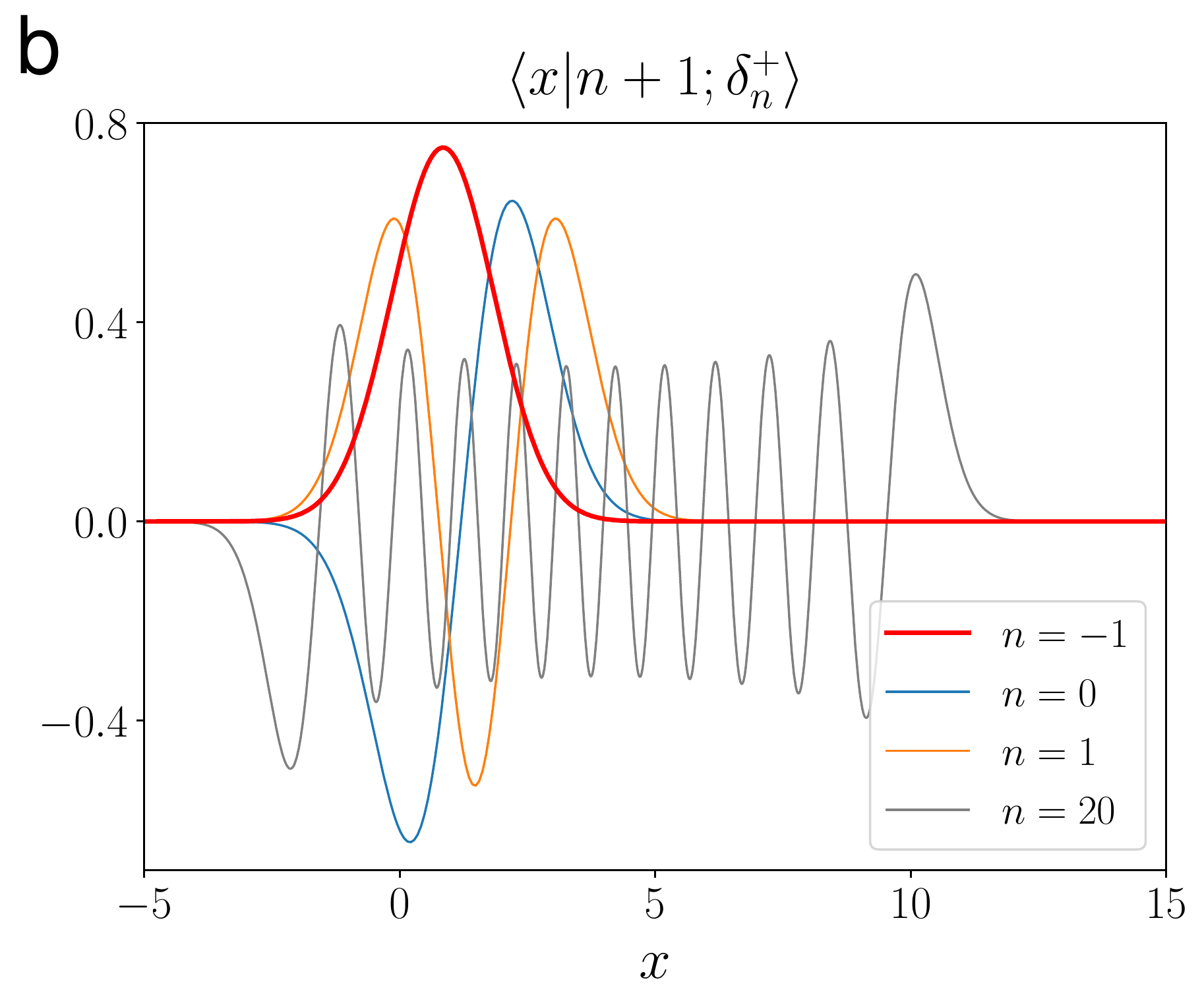} \caption{(a) Analytical spectrum of $\hat{H}_{2}''(x,-\mathrm{i}\partial_{x},\lambda)$
as a function of $k_{y}$. (b) Analytical mode structure of the TLCW.
The result agrees well with the numerical solution shown in Fig.\,\ref{fig:1Dspectrum}.}
\label{fig:TDCspectrum}
\end{figure}

\section{Conclusions and discussions \label{sec:Conclusions} }

Inspired by advances in topological materials in condensed matter
physics \citep{thouless1982quantized,halperin1982quantized,hasan2010colloquium,bernevig2013topological,qi2011topological,armitage2018weyl},
study of topological waves in classical continuous media, such as
electromagnetic materials \citep{silveirinha2015chern,silveirinha2016bulk,gangaraj2017berry,marciani2020chiral},
fluid systems \citep{delplace2017topological,Faure2019,perrot2019topological,tauber2019bulk,venaille2021wave,zhu2021topology,souslov2019topological,qin2019kelvin,fu2020physics,David2022},
and magnetized plasmas \citep{gao2016photonic,yang2016one,parker2020nontrivial,parker2020topological,parker2021topological,fu2021topological,Fu2022a,Rajawat2022,qin2021spontaneous},
has attracted much attention recently. The Topological Langmuir-Cyclotron
Wave (TLCW) is a recently identified topological surface excitation
in magnetized plasmas generated by the nontrivial topology at the
Weyl point due to the Langmuir wave-cyclotron wave resonance \citep{fu2021topological,Fu2022a}.
In this paper, we have systematically developed a theoretical framework
to describe the TLCW.

It has been realized that the theoretical methodology for studying
topological material properties in condensed matter physics cannot
be directly applied to classical continuous media, because the momentum
(wavenumber) space for condensed matter is periodic, whereas that
for classical continuous media is not. Specifically, the typical momentum
space for classical continuous media is $\mathbb{R}^{n}$ $(n=1,2,3),$
and it is difficult to integrate the Berry curvature over $\mathbb{R}^{n}$
to obtain an integer number that can be called the Chern number. The
difficulty has been attributed to the fact that $\mathbb{R}^{n}$
is not compact, and different remedies have been proposed accordingly.
However, we demonstrated that the key issue is not whether the momentum
space is non-compact, but rather that it is contractible. When the
base manifold is contractible, all vector bundles on it are topologically
trivial, and whether an integer index can be designed is irrelevant.
For classical continuous media, nontrivial topology can be found only
for vector bundles over phase space. Without modification, the Atiyah-Patodi-Singer
(APS) index theorem \citep{Atiyah1976} proved for spectral flows
over $S^{1}$ is only applicable to condensed matters, and Faure's
index theorem \citep{Faure2019} for spectral flows over $\mathbb{R}$-valued
$k_{y}$ should be adopted for classical continuous media.

In the present study, the TLCW is defined as a spectral flow of a
Pseudo-Differential-Operator (PDO) $\hat{H}$ for plasma waves in
an inhomogeneous magnetized plasma, and the semi-classical parameter
of the Weyl quantization operator is identified as the ratio between
electron gyro-radius and the scale length of the inhomogeneity. We
formally constructed the Hermitian eigenmode bundles of the bulk Hamiltonian
symbol $H$ corresponding to the PDO $\hat{H},$ and emphasized that
the properties of spectral flows are determined by the topology of
the eigenmode bundles over non-contractible phase space manifolds.
To calculate Chern numbers of eigenmode bundles over a 2D sphere in
phase space, as required by Faure's index theorem, a boundary isomorphism
theorem (Theorem \ref{thm:BoundaryIso}) was established.

The TLCW is proved to exist in magnetized plasmas as a spectral flow
with the spectral index being one. The Chern theorem (Theorem \ref{thm:Chern}),
instead of the Berry connection or any other connection, was used
to calculate the Chern numbers. Finally, we developed an analytically
solvable model for the TLCW using a tilted phase space Dirac cone.
An analytical solution of the PDO of a generic tilted phase space
Dirac cone was found, which generalized the previous result for a
straight Dirac cone \citep{Faure2019}. The spectral flow index of
the tilted Dirac cone was calculated to be one, and the mode structure
of the spectral flow was found to be a shifted Gaussian function.

As a topological edge wave, the TLCW can propagate unidirectionally
and without reflection and scattering along complex boundaries. Due
to this topological robustness, it might be relatively easy to excite
the TLCW experimentally. Of course, laboratory and astrophysical plasmas
are subject to many more physical effects that have not been included
in the present model, such as collisions and finite temperature. For
practical application, these factors need to be carefully evaluated
by experimental and theoretical methods. 
\begin{acknowledgments}
This research was supported by the U.S. Department of Energy (DE-AC02-09CH11466).
We thank Dr. F. Faure, Dr. P. Delplace, and Prof. B. Simon for fruitful
discussion. The present study is inspired by their groundbreaking
contributions. 
\end{acknowledgments}

\bibliography{ref}

\end{document}